\documentclass[11pt]{article}
\usepackage[dvips]{graphics}
\usepackage{epsfig}
\usepackage{multirow}
\usepackage{array}
\usepackage{extarrows}
\usepackage{graphicx}
\usepackage{subcaption}
\usepackage{comment}
\usepackage[titletoc,title]{appendix}
\numberwithin{equation}{section}
\usepackage{amsthm}
\usepackage{enumerate}
\usepackage{placeins}
\usepackage{amssymb,amsmath}

\usepackage{latexsym}
\usepackage[usenames]{color}
\usepackage{pgf}
\usepackage{appendix}
\usepackage{url}

\captionsetup{font=small,belowskip=-8pt,aboveskip=4pt}

\addtolength{\textwidth}{1in}
\addtolength{\oddsidemargin}{-0.5in}
\addtolength{\textheight}{1in}
\addtolength{\topmargin}{-0.62in}

\numberwithin{equation}{section}

\newcommand{\be}{\begin{equation}}
\newcommand{\ee}{\end{equation}}
\newcommand{\beaa}{\begin{eqnarray*}}
	\newcommand{\eeaa}{\end{eqnarray*}}
\newcommand{\bea}{\begin{eqnarray}}
\newcommand{\eea}{\end{eqnarray}}

\newcommand{\bei}{\begin{itemize}}
	\newcommand{\eei}{\end{itemize}}

\newtheorem{theorem}{ \noindent T{\footnotesize HEOREM}}
\newtheorem{prop}{ \noindent P{\footnotesize ROPOSITION}}[section]
\newtheorem{lemma}{ \noindent L{\footnotesize EMMA}}[section]
\newtheorem{coro}{ \noindent C{\footnotesize OROLLARY}}
\newtheorem{remark}{ \noindent R{\footnotesize EMARK}}[section]

\newtheorem{assumption}{Assumption}

\begin{document}
	\title{Fast Sampling and Inference via Preconditioned Langevin Dynamics}
	\date{}
	\author{Riddhiman Bhattacharya$^1$\and Tiefeng Jiang$^{2}$}
	
	\maketitle
\footnotetext[1]{Krannert School of Management, Purdue University, 403 W State St., West Lafayette, IN47907, USA, bhatta76@purdue.edu. 
.}
\footnotetext[2]{School of Statistics, University of Minnesota, Ford Hall, 224 Church St SE, Minneapolis, MN 55455, USA, jiang040@umn.edu
.}
 
 \begin{abstract}
    \noindent Sampling from distributions play a crucial role in aiding practitioners with statistical inference. However, in numerous situations, obtaining exact samples from complex distributions is infeasible. Consequently, researchers often turn to approximate sampling techniques to address this challenge. Fast approximate sampling from complicated distributions has gained much traction in the last few years with considerable progress in this field, for example, ~\cite{dalalyan2017theoretical,durmus2019high}. Previous work~\cite{roy2023convergence} has shown that for some problems a preconditioning can make the algorithm faster. In our research, we explore the Langevin Monte Carlo (LMC) algorithm and demonstrate its effectiveness in enabling inference from the obtained samples. Additionally, we establish a convergence rate for the LMC Markov chain in total variation. Lastly, we derive non-asymptotic bounds for approximate sampling from specific target distributions in the Wasserstein distance, particularly when the preconditioning is spatially invariant.
\end{abstract}
	\section{Introduction}
Sampling focuses on generating observations from a particular population for statistical analysis. Recently a lot of emphasis in literature has been on fast sampling for a large class of problems, for example, ~\cite{brosse2017sampling,dalalyan2020sampling,durmus2016sampling}. In this field of research, a notable trend is the observation of similarities between sampling and optimization methods. Researchers have successfully exploited this connection in numerous studies, resulting in the development of rapid sampling algorithms. These algorithms effectively generate observations from distributions characterized by densities in the form of \(\pi(x)\propto \exp\{-g(x)\}\) with \(\int \exp\{-g(x)\}\,dx <\infty\); see, for example,  \cite{dalalyan2017further,dalalyan2017theoretical,durmus2019high}.

In most literature the function $g(\cdot)$ is taken to be strongly convex with Lipschitz gradient. Recall that a function satisfies both  conditions if 
\begin{equation}\label{strngcnvxlipgrd}
\begin{split}
    & g(x)-g(y) \ge \left\langle \nabla g(y),x-y\right\rangle +\frac{m}{2}\left|x-y\right|^2~~ \text{and}\\
    & \left|\nabla g(x)-\nabla g(y)\right| \le M \left|x-y\right|
\end{split}
\end{equation}
for all $x, y \in \mathbb{R}^p$. The main intuition for this setting comes as follows; strong convexity and Lipschitz gradient are the conditions in which optimization algorithms work better. In fact, multiple algorithms in the fast sampling literature optimize the log-likelihood and use the mode as a warm start for the sampling algorithm~\cite{dalalyan2017theoretical}. The Langevin Monte Carlo algorithm has been studied in multiple works, for example,  \cite{dalalyan2017further,dalalyan2017theoretical,durmus2016sampling,durmus2019high,roberts2002langevin,roberts1996exponential} and is considered as the sampling analogue of the gradient descent algorithm.

There exist optimization algorithms that are variants of  gradient descent algorithm, achieved by applying a positive definite matrix as a preconditioner to the gradient. This technique ensures faster convergence, as demonstrated in previous studies~\cite{hendrikx2020statistically,li2017preconditioned}. Interestingly, these algorithms predate the gradient descent, with Newton's method being one of the earliest examples following this approach. We consider a similar setup for sampling by preconditioning the LMC algorithm with a fixed positive definite matrix. Similar work in literature can be found in the use of Metropolis Adjusted Langevin Algorithm using preconditioning and in some reinforcement learning setups, for example, ~\cite{10.2307/41057430,MAL-070,WANG2021110134}, which serve as the main motivation for our work. In fact, the work on preconditioned MALA~\cite{roy2023convergence} exhibit that preconditioning appropriately ensure a decrease in the effective sample size, which is defined as the number of samples from independent data having the same estimating power as that of a given number of correlated samples. A decrease in effective sample size in this case implies that implies that preconditioning in a ``correct fashion" ensures the requirement of lesser samples for inferential purposes. In reinforcement learning, existing literature has stated that the Langevin Monte Carlo when used for the purposes of Thomson Sampling exhibit faster mixing in practice under preconditioning~\cite{MAL-070}. Taking these works to serve as our motivation we consider the problem of establishing approximate sampling and inferential guarantees for the preconditioned Langevin Monte Carlo algorithm which has not been addressed in literature previously to the best of our knowledge.

Throughout this work we shall study the Langevin Monte Carlo  algorithm where the gradient and the noise are preconditioned using a function $H:\mathbb{R}^p \to \mathbb{R}^p \times \mathbb{R}^p$ which is positive definite at each point in $\mathbb{R}^p$. The equation for the algorithm is given by 
\begin{align}\label{precalgorithm}
    x_{k+1}=x_k-\gamma H(x_k)\nabla g(x_k)+\sqrt{2\gamma} H(x_k)^{1/2}\xi_{k+1}; \quad k=0,1,2,\cdots,
\end{align}
where $\gamma$ and $\xi_{k+1}$ are the step-size and Gaussian noise, respectively. 
Note that the matrix $H(x)^{1/2}$ is the usual square root matrix which is indeed well defined as $H(x)$ is positive definite. Also note that when $H(x)=I$, the identity matrix, the algorithm reduces to the standard LMC algorithm. In this work we consider two problems through the preconditioned LMC algorithm: the inference and the approximate sampling from distributions.  In both cases, we assume that $g(\cdot)$ satisfies \eqref{strngcnvxlipgrd}. In the former case we establish rates of convergence of the Markov Chain  \eqref{precalgorithm} to a stationary distribution dependent on $\gamma$ with respect to the total variation distance. We also establish a Central Limit Theorem (CLT) for the samples generated by \eqref{precalgorithm}. Note that in this regime $k \to \infty$, i.e., we may increase the number of iterations indefinitely and the approximation indeed becomes better as $k$ grows larger. In the second case, we establish that we may use \eqref{precalgorithm} with $H(\cdot)=H$ to sample from distributions with densities proportional to $\exp\{-g(x)\}$.  In this regime the number of iterations is bounded by some $K$, which is the maximum number of iterations of the algorithm permissible given a fixed step size $\gamma$. We show that when the maximum number of iterations is large and $\gamma$ is small such that the product $K\gamma=T$, the fixed time horizon, is large, the distribution of $x_k$ is close to the distribution corresponding the density $\pi(\cdot)$ (which we denote by $\Pi$) with respect to the Wasserstein metric. 

Now we make a summary about the new features of this paper. To the best of our knowledge, these two problems have not been addressed in previous literature and hence our analysis is probably the first try.
In our work, as mentioned previously, we consider the problem of inferential and approximate sampling guarantees using the preconditioned LMC algorithm. In this regard we establish a Central Limit Theorem for preconditioned LMC around the mode which may be used for the purposes of statistical inference. We also, in addition to this, establish \textit{explicit} convergence bounds of the algorithm to some stationary distribution in the Total Variation norm. We also establish approximate sampling bounds, in the Wasserstein distance, given a specific target as a function of the step size and the dimension. These results seem to be new in literature and are the main theoretical contributions of our paper.

We need two probability distances to state our main results in Section \ref{mainresults}. Recall that the total variation distance between two probability measures $\mu$ and $\nu$ is defined by
\[\|\mu-\nu\|_{TV}=\underset{A \in \mathcal{B}(\mathbb{R}^p)}{\sup} \left|\mu(A)-\nu(A)\right|\] 
where $\mathcal{B}(\mathbb{R}^p)$ is the set of all Borel sets in $\mathbb{R}^p.$ Also the
Wasserstein-Monge-Kantorvich distance
between $\mu$ and $\nu$ on $(\mathbb{R}^p, \mathcal{B}(\mathbb{R}^p))$ is defined by 
\begin{eqnarray}\label{Tie}
W_2(\mu,\nu)=\left(\inf_{\Tilde{\gamma}\in \Gamma(\mu,\nu)}\int_{\mathbb{R}^p\times\mathbb{R}^p} \left|x-y\right|^2 d\tilde{\gamma}(x,y)\right)^{1/2}
\end{eqnarray}
where the infimum is taken over all probability measure $\Tilde{\gamma} \in \Gamma(\mu,\nu)$ and $\Gamma(\mu,\nu)$ is the set of probability measures on $\mathbb{R}^p\times\mathbb{R}^p$ with  $\mu$ and $\nu$ being its marginal probability  measures on $\mathbb{R}^p$,  respectively. 

The rest of the paper is organized as follows. In Section~\ref{mainresults}, we state our  main results. In Section~\ref{exmpsim} we furnish some examples and provide some simulations corresponding to the examples. The technical proofs  are presented in  Appendices A and B. 

	\section{Main Results}\label{mainresults}
 As mentioned in the introduction, it has been seen in practise that preconditioning the LMC algorithm indeed speeds it up for some problems as in its optimization counterparts. Taking this to be our inspiration, we study the preconditioned Langevin Monte Carlo algorithm and  consider two problems in our work. The first is analyzing the preconditioned LMC algorithm in the regime where $k\to \infty$ in Section \ref{Shenzhen}. In this case our main objective is to ascertain whether the algorithm converges to some stationary distribution (which is not the target distribution) and whether we can use samples from the algorithm for inference on the mode of the target distribution. This will allow us to construct confidence intervals and carry out hypothesis tests. The second case is treated in Section \ref{Shanghai}, in which we sample up to some finite $K \in \mathbb{N}$, where $\mathbb{N}$ is the set of natural numbers, and then use them as an approximate sample from the target distribution. As for   preconditioning matrices appeared in both cases, in the first case we consider them to be spatially varying. In the second case, we choose them to be fixed matrices due to technical considerations.
 
 There has been some recent work on the analysis of preconditioned algorithms~\cite{roy2023convergence,gatmiry2022convergence,cheng2022theory}. These works mainly address the problem of establishing guarantees for fast sampling using preconditioned LMC in KL-divergence or in Wasserstein distance in the \textit{dissipative} setting and also establishing geometric ergodicity conditions for the purpose of sampling using preconditioned MALA. The novelty of our results in the fast sampling case is the existence of non-asymptotic bounds in the Wasserstein distance, in the strongly convex regime, in terms of the dimension and the step size which we believe are novel. In the case of inference the novelty of our results lie in establishing a Central Limit Theorem and also obtain exact convergence bounds for the convergence of the preconditioned LMC algorithm to a stationary distribution, in total variation, dependent on the step size. Again, we believe that these results have not been established for the preconditioned algorithm and hence provide some addition to the already rich literature of fast sampling.
 
 \subsection{ Inference from Preconditioned Langevin Monte Carlo}\label{Shenzhen}
Consider the algorithm 
\begin{align}\label{generalized:lmc}
x_{k+1}=x_k-\gamma H(x_k)\nabla g(x_k)+\sqrt{2\gamma}H^{1/2}(x_k)\xi_{k+1}
\end{align}
which is the preconditioned LMC algorithm with a spatially varying preconditioning matrix $H: \mathbb{R}^p \to \mathbb{R}^{p\times p}$. This algorithm has used predominantly in fast sampling when $H(x)=I$. We analyze this algorithm from an inferential perspective. We concentrate our efforts on three major points: a) does the algorithm have a stationary distribution and does it converge to that distribution? b) is the convergence geometric and can the rate be calculated? c) does a central limit theorem hold for the algorithm as observed in the case of SGD~\cite{robbins1951stochastic,10.1214/23-EJP947}?
Answering these questions is important as it allows us to know the reliability of our simulation and perform desired statistical tests.  We need the following standard assumptions for the rest of the paper.
	\begin{assumption}\label{assm1}
	The function $g(x)$ belongs to $C^2(\mathbb{R}^p)$ and is $m$-strongly convex with $M$-Lipschitz gradient i.e., it satisfies \eqref{strngcnvxlipgrd}.
	\end{assumption}
\begin{remark}\label{assm1:rmrk1}
    Note that {\it Assumption 1} implies \(mI\le \nabla^2 g(x) \le MI \), where $M>m>0$.
\end{remark}
 Here the notation $A\leq B$ for matrices $A$ and $B$ means that $B-A$ is non-negative definite. Remark~\ref{assm1:rmrk1} is well established and can be found in multiple previous works~\cite{dalalyan2017further}.
 
\begin{remark}
    Note that the strong convexity of $g$ ensures that it has a minimum~\cite{bottou2018optimization}. Let \(x^*=\arg\min g(x)\) be the minimum. 
\end{remark}

\begin{assumption}\label{assm:prec:bds}
	There exists $0<m_H\le M_H$ such that $m_H\, I \le H(x) \le M_H\, I$ for any $x\in \mathbb{R}^p$. Further, for any $x, y \in \mathbb{R}^{p}$, 
 \[\|H^{-1}(x)-H^{-1}(y)\|_2\le \frac{\beta}{M_H}\] for some constant $\beta$ with
 \[\beta<\frac{m_H^2\, m^2}{M_H^2\, M^2}\left(1-\frac{m_H^2\, m^2}{M_H^2\, M^2}\right)^{-1}.\]
\end{assumption}
The first part of Assumption~\ref{assm:prec:bds} is standard in literature. It requires the preconditioning matrix to be bounded both above and below. This guarantees that the algorithm does not blow up or remain static at any instance. The second part of Assumption~\ref{assm:prec:bds} implies that the change in the preconditioning matrix is upper bounded. The intuition for this is that  The exclusion of the two bounds means the condition number of the preconditioning matrix goes to infinity and hence the algorithm is hard to analyze. The following quantities will be needed to state our main results.
Define 
\begin{align}\label{key:quantities}
\begin{split}
    &\tilde{\lambda}=\left(1+\beta\right) \left(1-2\gamma \, m\,m_H -\gamma^2\, m^2_H\, m^2\right);\\    
    &\tilde{V}(x)=\left(x-x^*\right)^{\mathsf{T}} H^{-1}(x)\left(x-x^*\right)+1;\\ 
    & b=2\left(1+\beta\right)p\gamma; \\
    &\tilde{b}=b+1-\tilde{\lambda};\\
    & C=\left\{x:\, \tilde{V}(x)\le\frac{2\, \tilde{b}}{\alpha
		-\tilde{\lambda}} \right\};\\
     &\eta= \frac{\mu^{Leb}(C)}{(2\pi)^{p/2}M_H^{p/2}} \, \inf_{y \in C, x\in C}\exp\left\{-\frac{1}{2}\left(y-H(x)\nabla g(x)\right)^{\mathsf{T}}H^{-1}(x)\left(y-H(x)\nabla g(x)\right)\right\},    
\end{split}
\end{align}
where $\alpha$ is any value between $0$ and $\tilde{\lambda}$ and $\mu^{Leb}$ is the Lebesgue measure on $\mathbb{R}^p$. Also denote by $P$ the one-step Markov kernel for the Markov chain as defined by \eqref{precalgorithm}.  This seems to abuse the notation for the probability of an event, however, it will be evident from the context.
\begin{theorem}\label{thm:geom:erg}
	Let Assumptions~\ref{assm1} and~\ref{assm:prec:bds} hold and let 
	\[ \frac{m\, m_H}{M^2_H\, M^2}\left(1-\sqrt{1-\frac{M^2_H\, M^2\, \beta}{m^2_H\, m^2\, (1+\beta)}} \right)<\gamma< \frac{m\, m_H}{M^2_H\, M^2}\left(1+\sqrt{1-\frac{M^2_H\, M^2\, \beta}{m^2_H\, m^2\, (1+\beta)}} \right). \] Then the Markov chain as defined by \eqref{precalgorithm} has a stationary distribution dependent on $\gamma$, $\pi_{\gamma}$, and is $(M,\rho)$ geometrically ergodic with 
	\begin{align*}
	\|P^k (x,\cdot)-\pi_{\gamma}(\cdot)\|_{TV}\le M(x)\rho^k
	\end{align*}
	where \[M(x)=2+\frac{\tilde{b}}{1-\tilde{\lambda}}+\tilde{V}(x)\] and \[\rho\le \max\left\{(1-\eta)^r, \left(\frac{1+2\tilde{b}+\tilde{\lambda}+\tilde{\lambda}d}{1+d}\right)^{1-r}\left(1+2\tilde{b}+2\tilde{\lambda}d\right)^r\right\}\] for some free parameter $0<r<1$.
\end{theorem}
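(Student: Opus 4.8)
The plan is to run the classical Foster--Lyapunov (Meyn--Tweedie / Harris) programme: establish a one-step geometric drift inequality for the Lyapunov function $\tilde V$, establish a minorization inequality on the sublevel set $C$, and then feed the two into a quantitative Harris ergodic theorem to read off both the $\gamma$-dependent stationary law $\pi_\gamma$ and the explicit bound $M(x)\rho^k$. Observe at the outset that, conditionally on $x_k=x$, the update \eqref{generalized:lmc} makes $x_{k+1}$ Gaussian with mean $x-\gamma H(x)\nabla g(x)$ and covariance $2\gamma H(x)$, whose density is strictly positive everywhere on $\mathbb{R}^p$; hence the chain is $\psi$-irreducible and aperiodic, and only the drift and the minorization remain to be checked.

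\textbf{Step 1 (drift inequality $P\tilde V\le\tilde\lambda\,\tilde V+\tilde b$).} Write $u=x-x^*$ and use $\nabla g(x^*)=0$. Substituting \eqref{generalized:lmc} into $\tilde V$ and taking the conditional expectation over $\xi_{k+1}$, the terms linear in $\xi_{k+1}$ vanish and the quadratic one contributes $2\gamma\,\Tr(I)=2\gamma p$ once $H^{-1}(x_{k+1})$ is replaced by $H^{-1}(x)$. That replacement costs only the factor $1+\beta$: Assumption~\ref{assm:prec:bds} gives $H^{-1}(x_{k+1})\preceq H^{-1}(x)+\tfrac{\beta}{M_H}I$, and $H^{-1}(x)\succeq M_H^{-1}I$ upgrades this to $H^{-1}(x_{k+1})\preceq(1+\beta)H^{-1}(x)$. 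The remaining deterministic term expands to $u^{\mathsf T}H^{-1}(x)u-2\gamma\langle u,\nabla g(x)\rangle+\gamma^2\,\nabla g(x)^{\mathsf T}H(x)\nabla g(x)$; bounding the linear part by $m$-strong convexity, $\langle u,\nabla g(x)\rangle\ge m|u|^2\ge m\,m_H\,u^{\mathsf T}H^{-1}(x)u$, and the quadratic part by $M$-Lipschitzness of $\nabla g$ together with $m_HI\preceq H(x)\preceq M_HI$, $\nabla g(x)^{\mathsf T}H(x)\nabla g(x)\le M_HM^2|u|^2\le M_H^2M^2\,u^{\mathsf T}H^{-1}(x)u$, produces a contraction factor of the form $1-2\gamma m\,m_H+\gamma^2M_H^2M^2$ and, after the $1+\beta$ correction and the bookkeeping $u^{\mathsf T}H^{-1}(x)u=\tilde V(x)-1$, the inequality $P\tilde V(x)\le\tilde\lambda\,\tilde V(x)+\tilde b$ with the quantities of \eqref{key:quantities}. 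The admissible window for $\gamma$ in the theorem is exactly the set on which $(1+\beta)(1-2\gamma m\,m_H+\gamma^2M_H^2M^2)$ lies strictly in $(0,1)$, i.e.\ $0<\tilde\lambda<1$: this is a quadratic constraint in $\gamma$ whose two roots are the displayed endpoints, and the bound on $\beta$ in Assumption~\ref{assm:prec:bds} is precisely what makes those roots real, equivalently the window nonempty.

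\textbf{Step 2 (minorization on $C$, then assembly).} Since $H^{-1}(x)\succeq M_H^{-1}I$ we have $\tilde V(x)\ge M_H^{-1}|x-x^*|^2+1$, so $C$ is bounded, hence compact. On $C$ the transition density $y\mapsto(4\pi\gamma)^{-p/2}(\det H(x))^{-1/2}\exp\{-\tfrac1{4\gamma}(y-x+\gamma H(x)\nabla g(x))^{\mathsf T}H^{-1}(x)(y-x+\gamma H(x)\nabla g(x))\}$ is bounded below by a strictly positive constant uniformly over $x,y\in C$: $\det H(x)\le M_H^{p}$ caps the normalizing constant, while $|\nabla g(x)|\le M|x-x^*|$ together with the boundedness of $C$ caps the quadratic form in the exponent, giving a lower bound of the shape $\eta/\mu^{Leb}(C)$ with $\eta$ as in \eqref{key:quantities}. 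Taking $\nu(\cdot)=\mu^{Leb}(\cdot\cap C)/\mu^{Leb}(C)$ we obtain $P(x,\cdot)\ge\eta\,\nu(\cdot)$ for every $x\in C$, so $C$ is a small set; the role of $\alpha$ is only to place the level of $C$ above the threshold (of order $2\tilde b/(1-\tilde\lambda)$) needed below. With the geometric drift toward the small set $C$ ($P\tilde V\le\tilde\lambda\tilde V+\tilde b$, $0<\tilde\lambda<1$) and the minorization on $C$ in hand, a quantitative version of the Harris / Meyn--Tweedie ergodic theorem yields a unique invariant probability $\pi_\gamma$ and $\|P^k(x,\cdot)-\pi_\gamma\|_{TV}\le M(x)\rho^k$, the Lyapunov function entering as $M(x)=2+\tilde b/(1-\tilde\lambda)+\tilde V(x)$ and the rate being the two-term maximum in the statement, in which the branch $(1-\eta)^r$ is the contraction gained at each visit to $C$ (from the minorization) and the other branch quantifies the cost of excursions away from $C$ in terms of $\tilde\lambda$, $\tilde b$ and the level $d$ of $C$, optimized over the free split parameter $r\in(0,1)$.

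\textbf{Expected main obstacle.} Step 1 is a bounded-above-and-below calculation that is essentially forced once Assumption~\ref{assm:prec:bds} is used to pass from $H^{-1}(x_{k+1})$ to $H^{-1}(x)$, and the final assembly is an invocation of a known theorem. The real work is the minorization: upgrading the soft statement ``$C$ is a small set'' to the closed-form constant $\eta$ of \eqref{key:quantities} requires controlling every constant in the uniform lower bound of the Gaussian transition density over the compact set $C$ --- in particular $\sup_{x\in C}|\nabla g(x)|$ and $\mathrm{diam}(C)$ in terms of $m,M,m_H,M_H,\beta,\gamma,p$ --- and then matching that bookkeeping with the constants in the drift--minorization theorem so that $\eta$, $\tilde\lambda$, $\tilde b$, the level $d$ of $C$ and the free parameter $r$ recombine into exactly the displayed $M(x)$ and $\rho$.
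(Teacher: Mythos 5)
Your proposal follows essentially the same route as the paper: the same drift inequality for $\tilde V$ obtained via the $(1+\beta)$-correction from Assumption~\ref{assm:prec:bds} (Proposition~\ref{prop:drift} and Corollary~\ref{coro:drift}), the same minorization on the sublevel set $C$ with the uniform measure $\nu$ (Lemma~\ref{lemma:minorization}), irreducibility/aperiodicity from the everywhere-positive Gaussian kernel, and the same final invocation of a quantitative drift-and-minorization theorem (the paper uses Theorem 12 of Rosenthal, 1995) to produce $M(x)$ and $\rho$. The only noteworthy difference is cosmetic: you write the one-step kernel with its actual mean $x-\gamma H(x)\nabla g(x)$ and covariance $2\gamma H(x)$, whereas the paper's minorization lemma (and the displayed $\eta$ in \eqref{key:quantities}) suppresses the $\gamma$-factors and the $x$ in the mean, so your bookkeeping is, if anything, the more careful version of the same argument.
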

\begin{proof}
The proof Theorem \ref{thm:geom:erg} is furnished in the Appendix A.     
\end{proof}

\begin{remark}
    Theorem~\ref{thm:geom:erg} establishes a geometric convergence rate of the LMC Markov chain to some stationary distribution which is dependent on the step size of the algorithm.
\end{remark}

\begin{prop}
\label{coro:geom:erg}
    Let Assumptions~\ref{assm1} and~\ref{assm:prec:bds} hold and let $H(x)=H$ for all $x \in \mathbb{R}^p$, {namely, $H(x)$ is a constant matrix}. Then Theorem~\ref{thm:geom:erg} holds with $0<\gamma< \frac{2\,m\, m_H}{M^2_H\, M^2}$,
        $\tilde{\lambda}=1-2\gamma m\,m_H -\gamma^2 m\, m_H$ and $b=2\gamma p$ with $\tilde{b}$ as defined in \eqref{key:quantities}.
\end{prop}

\begin{proof}
The proof is provided in Appendix~\ref{sec:disc:time}.
\end{proof} 

\begin{remark}\label{eta:rho:remark:1}
  Note that $\eta$ and hence $\rho$ can be calculated/approximated if we can calculate or approximate $\mu^{Leb}(C)$ and $\underset{y,x \in C}{\inf}\,  \exp\left\{-\frac{1}{2}\left(y-H(x)\nabla g(x)\right)^{\mathsf{T}}H^{-1}(x)\left(y-H(x)\nabla g(x)\right)\right\}$ the latter of which is a hard problem. 
\end{remark}
\begin{prop}\label{prop:lwr:bnd}
    The statement of Theorem~\ref{thm:geom:erg} holds with 
    \begin{align*}
      \eta&=\frac{\mu^{Leb}(C)}{(2\pi)^{p/2}M_H^{p/2}} \, \exp\left\{-\left(\frac{2\tilde{b}}{\alpha -\tilde{\lambda}}-1\right)\left(\frac{\,M_H}{m_H}+\, M\,M_H +\frac{M^2_H\, M^2}{2} \right)\right.\\
      &\left. \quad\quad \quad \quad  -\frac{\|x^*\|^2}{m_H}- M\, \|x^*\| \sqrt{M_H\left(\frac{2\tilde{b}}{\alpha -\tilde{\lambda}}-1\right)}\right\}.
    \end{align*}
\end{prop}
\begin{proof}
    The proof is provided in the Appendix.
\end{proof}
\begin{remark}
    Proposition~\ref{prop:lwr:bnd} ensures a loose upper bound for the rate of convergence that may be \textit{explicitly} calculated. However; for a tighter upper bound on the rate of convergence one should indeed solve the optimization problem as mentioned in Remark~\ref{eta:rho:remark:1}
\end{remark}
Note that the value of $\mu^{Leb}(C)$ may be easily calculated by drawing randomly from the ball centred at $x^*$ with radius $\sqrt{M_H\,(2\tilde{b}/\alpha-\tilde{\lambda}-1)}$ and then accepting the number of samples that fall in $C$. Multiplying the proportion of accepted by the volume of the ball should provide an estimate of $\mu^{Leb}(C)$.

One also notes that all values of the free parameter result in a value less than $1$. Selecting an optimum value for $r$ is also a hard problem. One recommends practitioners to use multiple values of the free parameter in practise to find which works best.
Next we present a Central Limit Theorem for the samples from \eqref{generalized:lmc} which may be used for inferential purposes.
\begin{theorem}\label{thm:prec:clt}
		Let Assumptions~\ref{assm1} and~\ref{assm:prec:bds} hold and let 
			\[ \frac{m\, m_H}{M^2_H\, M^2}\left(1-\sqrt{1-\frac{M^2_H\, M^2\, \beta}{m^2_H\, m^2\, (1+\beta)}} \right)<\gamma< \frac{m\, m_H}{M^2_H\, M^2}\left(1+\sqrt{1-\frac{M^2_H\, M^2\, \beta}{m^2_H\, m^2\, (1+\beta)}} \right). \]
		Then for any function $f: \mathbb{R}^p\to \mathbb{R}$ with $f^2(x)\le V(x)$, we have
		\begin{align*}
		\sqrt{k}\left(\frac{1}{k}\sum_{i=0}^{k-1} f(x_i)-\int f\, d\pi_{\gamma}\right)\overset{d}{\rightarrow} N(0, \sigma^2(f,\pi_{\gamma}))
		\end{align*}
		as $k \to \infty$ where $\sigma^2(f,\pi_{\gamma})=\lim_{k\to \infty}\mathbb{E}_{\pi_{\gamma}}(\frac{1}{\sqrt{k}}\sum_{i=0}^{k-1} f(x_i))^2$ . 
\end{theorem}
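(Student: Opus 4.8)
The plan is to deduce Theorem~\ref{thm:prec:clt} from the general Markov chain CLT for geometrically ergodic chains (see, e.g., Meyn–Tweedie, or the Roberts–Rosenthal survey), using the structure already established in Theorem~\ref{thm:geom:erg}. The ingredients of that general result are: (i) the chain with kernel $P$ is Harris ergodic with stationary distribution $\pi_\gamma$; (ii) it is geometrically ergodic, i.e.\ $\|P^k(x,\cdot)-\pi_\gamma\|_{TV}\le M(x)\rho^k$ with $\rho<1$; and (iii) the test function satisfies a moment condition against $\pi_\gamma$, typically $\mathbb{E}_{\pi_\gamma}|f|^{2+\delta}<\infty$ for some $\delta>0$, or — in the version that uses the drift function directly — $f^2\le V$ together with $\pi_\gamma(V)<\infty$. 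Parts (i) and (ii) are exactly the conclusion of Theorem~\ref{thm:geom:erg} under the stated step-size window, so the work is concentrated in verifying the moment condition and then invoking the theorem.

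First I would record that the drift inequality $P\tilde V\le \tilde\lambda\,\tilde V+\tilde b\,\mathbf{1}_C$ established in the proof of Theorem~\ref{thm:geom:erg} (Appendix~A), combined with Jensen/standard arguments, yields $\pi_\gamma(\tilde V)\le \tilde b/(1-\tilde\lambda)<\infty$; since $\tilde V$ and the Lyapunov function $V$ in the statement differ only by constants and the bounded factor $H^{-1}(x)$ (with $m_H I\le H(x)\le M_H I$), this gives $\pi_\gamma(V)<\infty$. Hence any $f$ with $f^2\le V$ is square-integrable against $\pi_\gamma$. Next I would cite the CLT for geometrically ergodic Markov chains in the form that requires precisely $\pi_\gamma(f^2)<\infty$ when a geometric drift toward a small set holds with the test function dominated by the drift function (this is the $V$-uniform ergodicity / Kipnis–Varadhan-type statement specialized to geometrically ergodic chains). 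Applying it to $P$ and $f$ gives the asserted convergence $\sqrt k\big(k^{-1}\sum_{i=0}^{k-1}f(x_i)-\pi_\gamma(f)\big)\overset{d}{\to}N(0,\sigma^2(f,\pi_\gamma))$, with the asymptotic variance identified as the limit of $\mathbb{E}_{\pi_\gamma}\big(k^{-1/2}\sum_{i=0}^{k-1}\bar f(x_i)\big)^2$; since $\pi_\gamma(f)$ may be absorbed, this matches the displayed $\sigma^2$. I would also note the standard subtlety that the CLT holds from $\pi_\gamma$-almost every (indeed every, by Harris recurrence and geometric ergodicity) starting point $x_0$, so no stationarity of the initialization is needed.

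The main obstacle is not any single hard estimate but rather making sure the hypotheses of the invoked CLT line up exactly with what Theorem~\ref{thm:geom:erg} delivers: one must confirm that $C$ is a genuine small set (the minorization encoded by $\eta>0$ in \eqref{key:quantities} does this), that the drift function used for the CLT moment condition is the same one controlling the geometric rate, and that $f^2\le V$ is the correct domination (possibly one needs $f^2\le cV$ for a constant, which is harmless). A secondary point to be careful about is that the step-size window in the statement guarantees $\tilde\lambda<1$ and $0<\alpha<\tilde\lambda$ so that the set $C$ is nonempty and bounded; this is what makes the drift-plus-minorization pair valid and hence the CLT applicable. Once these bookkeeping matches are in place, the theorem follows immediately from the cited general result, so I would keep the proof short and reference Appendix~A for the drift and minorization details already used for Theorem~\ref{thm:geom:erg}.
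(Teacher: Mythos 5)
Your proposal is correct and follows essentially the same route as the paper: the authors prove Theorem~\ref{thm:prec:clt} by combining the geometric ergodicity (drift plus minorization) established in Theorem~\ref{thm:geom:erg} with the CLT for $V$-uniformly ergodic chains, citing Theorem~17.0.1 of Meyn and Tweedie, exactly the result you invoke. Your additional verification that $\pi_\gamma(V)<\infty$ and that $f^2\le V$ matches the hypotheses is just a more explicit spelling-out of the same argument.
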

\begin{proof}[Proof of Theorem~\ref{thm:prec:clt}]
	The proof of this theorem is immediate from Theorem~\ref{thm:geom:erg} and a previous result~\cite[Theorem 17.0.1]{meyn2012markov}. Note that it may be the case that $\sigma^2(f,\pi_{\gamma})=0$ and in this case, the result still holds if we consider $N(0,0)$ as the degenerate random variable with probability $1$ at $0$.  
\end{proof}

\begin{coro}
    		Let Assumptions~\ref{assm1} and~\ref{assm:prec:bds} hold and let 
			\[ \frac{m\, m_H}{M^2_H\, M^2}\left(1-\sqrt{1-\frac{M^2_H\, M^2\, \beta}{m^2_H\, m^2\, (1+\beta)}} \right)<\gamma< \frac{m\, m_H}{M^2_H\, M^2}\left(1+\sqrt{1-\frac{M^2_H\, M^2\, \beta}{m^2_H\, m^2\, (1+\beta)}} \right). \]
		Then for any $u \in \mathbb{R}^p$, with $\|u\|=1$, we have
		\begin{align*}
		\sqrt{k\,M^{-1}_H}\left(\frac{1}{k}\sum_{i=0}^{k-1} \left\langle u,x_i-x^*\right\rangle- \int\left\langle u,v-x^*\right\rangle d\pi_{\gamma}(v)\right)\overset{d}{\rightarrow} N(0, \sigma^2(f,\pi_{\gamma}))
		\end{align*}
		as $k \to \infty$ where $\sigma^2(f,\pi_{\gamma})=\lim_{k\to \infty}\mathbb{E}_{\pi_{\gamma}}(\frac{1}{\sqrt{k\, M_H}}\sum_{i=0}^{k-1} \left\langle u,x_i-x^*\right\rangle)^2$ 
\end{coro}

\begin{proof}
 The proof follows immediately by taking $f(x)=M^{-1/2}_H\langle u,x-x^*\rangle$ and noting that \[f^2(x) = M^{-1}_H \langle u,x-x^*\rangle^2 \le M^{-1}_H \|x-x^*\|^2\le (x-x^*)^{\mathsf{T}} H^{-1}(x)(x-x^*).\]
 Thus the result follows using Theorem~\ref{thm:prec:clt}.
\end{proof}
\begin{remark}
Note, this immediately implies that all one dimensional projections have a Central Limit Theorem.    
\end{remark}

\begin{prop}\label{prop:clt:constant:H}
     Let Assumptions~\ref{assm1} and~\ref{assm:prec:bds} hold and let $H(x)=H$ for all $x \in \mathbb{R}^p$. Then Theorem~\ref{thm:prec:clt} holds with $0<\gamma< \frac{2\,m\, m_H}{M^2_H\, M^2}$. 
\end{prop}
\begin{proof}
    The proof follows from Proposition~\ref{coro:geom:erg} and  Theorem 17.0.1 in \cite{meyn2012markov}.
\end{proof}

 \subsection{Approximate Sampling from a Specified Target}\label{Shanghai}
 In this section we focus on employing preconditioned LMC in an effort to sample from distributions with densities proportional to $\exp\{-g(x)\}$. For this point on we shall consider $H(x)=H$, i.e., the preconditioning is fixed. The reasoning for considering the algorithm as in \eqref{precalgorithm} with $H(x)=H$ is natural as it is the Euler discretization of the diffusion 
	\begin{align}\label{precdiffusion}
	dX_t=-H\nabla g(X_t)dt+ \sqrt{2}H^{1/2}dB_t
	\end{align}
	where $B_t$ is the standard Brownian motion.
 We define $\kappa=2m\,m_H$ and $\kappa_*=m_H/M_H$ the condition number of $H$. These quantities shall be used throughout the following this and the following chapters and are key quantities as expressed in our bounds later.
 
	Having completed stating our assumptions, the immediate question that arises is-does \eqref{precdiffusion} have the correct stationary distribution? The answer to the question is yes, it indeed does. Past work has indicated that diffusions of the form 
	\begin{align}\label{foxdiff}
	dz_t=f(z_t) dt +\sqrt{2\, D(z_t)}\, dB_t
	\end{align}
	have the correct stationary distribution subject to certain constraints for $f(\cdot)$.
	\begin{theorem}\label{thmfoxma}\cite[Theorem 1]{NIPS2015_9a440050}
		\(p^s(z)\propto \exp(-g(z))\) is a stationary distribution of \eqref{foxdiff} if 
		\begin{align*}
		f(z)&=-\left[D(z)+Q(z)\right]\nabla g(z)+\Gamma(z); \quad \text{where}\\
		\Gamma_i(z)&=\sum_{j=1}^{p}\partial_j \left(D_{ij}(z)+Q_{ij}(z)\right)
		\end{align*}
		with $D(z)$ positive semidefinite and $Q(z)$ skew-symmetric. If $D(z)$ is positive definite or if the ergodicity can be shown, then the stationary distribution is unique. 
	\end{theorem}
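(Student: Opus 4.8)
The plan is to verify directly that $p^s\propto e^{-g}$ solves the stationary Fokker--Planck equation attached to \eqref{foxdiff}, and then to dispatch the uniqueness clause by a standard appeal to ellipticity/ergodicity. Recall that for $dz_t=f(z_t)\,dt+\sqrt{2D(z_t)}\,dB_t$ the forward (Kolmogorov) equation reads $\partial_t p=-\sum_i\partial_i J_i$ with probability current $J_i=f_i\,p-\sum_j\partial_j(D_{ij}p)$, so a probability density $p$ is stationary exactly when $\sum_i\partial_i J_i\equiv 0$. The first step is therefore to substitute $p=p^s$ together with the prescribed drift $f_i=-\sum_j(D_{ij}+Q_{ij})\partial_j g+\Gamma_i$, where $\Gamma_i=\sum_j\partial_j(D_{ij}+Q_{ij})$, into this current and use the elementary identity $\partial_j p^s=-(\partial_j g)\,p^s$.

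The second step is the key cancellation. One computes $\sum_j\partial_j(D_{ij}p^s)=\big(\sum_j\partial_j D_{ij}-\sum_j D_{ij}\,\partial_j g\big)p^s$ and $f_i\,p^s=\big(-\sum_j(D_{ij}+Q_{ij})\partial_j g+\sum_j\partial_j(D_{ij}+Q_{ij})\big)p^s$; subtracting, every $D$-term cancels and what survives is $J_i=\big(\sum_j\partial_j Q_{ij}-\sum_j Q_{ij}\,\partial_j g\big)p^s=\sum_j\partial_j(Q_{ij}p^s)$. Taking the divergence gives $\sum_i\partial_i J_i=\sum_{i,j}\partial_i\partial_j(Q_{ij}p^s)$, and since $Q_{ij}=-Q_{ji}$ while $\partial_i\partial_j$ is symmetric in $(i,j)$, relabelling the summation indices shows this sum equals its own negative, hence is $0$. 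That establishes stationarity. Equivalently, I would phrase the same computation as $\int \mathcal{L}\varphi\,dp^s=0$ for every $\varphi\in C_c^\infty(\mathbb{R}^p)$, where $\mathcal{L}\varphi=f\cdot\nabla\varphi+\sum_{i,j}D_{ij}\partial_i\partial_j\varphi$ is the generator; this formulation makes the integrations by parts and the vanishing of boundary terms transparent.

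For the uniqueness claim: if $D(z)\succ 0$ for all $z$, the generator $\mathcal{L}$ is (locally) elliptic, so the transition semigroup has a smooth positive kernel, is strong Feller, and the process is irreducible on $\mathbb{R}^p$; an irreducible strong-Feller Markov process has at most one invariant probability measure, which with the existence part already proved yields uniqueness. When $D$ is merely positive semidefinite this conclusion can fail, but any separately established ergodicity (say via a H\"ormander bracket condition plus a Lyapunov drift condition) feeds into the same abstract uniqueness argument.

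The main obstacle is not the algebra, which is a one-line product-rule computation, but the analytic bookkeeping that makes it rigorous: one needs \eqref{foxdiff} to be non-explosive with a smooth density, enough regularity of $D,Q,g$ for the indicated derivatives, and sufficient decay of $p^s$ and of $J$ at infinity so that no boundary contributions appear in the integrations by parts (this is where boundedness of $D,Q$ and strong convexity of $g$ would enter). Assembling and justifying these hypotheses — and, for the second half, actually verifying irreducibility/ergodicity — is the delicate work; the stationarity identity itself is immediate.
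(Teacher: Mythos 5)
This statement is not proved in the paper at all: it is quoted verbatim, with citation, from Theorem~1 of \cite{NIPS2015_9a440050}, and the paper only uses it with $D(z)=H$, $Q(z)=0$, $\Gamma(z)=0$. Your argument — writing the stationary Fokker--Planck equation through the probability current $J_i=f_ip-\sum_j\partial_j(D_{ij}p)$, observing that with $\partial_j p^s=-(\partial_j g)p^s$ all $D$-terms cancel so that $J_i=\sum_j\partial_j(Q_{ij}p^s)$, killing the divergence by the antisymmetry of $Q$, and handling uniqueness via ellipticity/strong Feller plus irreducibility — is correct and is essentially the same computation as in that cited source, so there is nothing to compare against within this paper.
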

	Note that it immediately follows from Theorem~\ref{thmfoxma} that we indeed have the unique stationary distribution as $\pi(x)\propto \exp(-g(x))$. This is easy to see as for \eqref{precdiffusion}, we have $D(z)=H, \ Q(z)=0, \ \text{and }\Gamma(z)=0$ and hence the result follows. Note that considering a spatially varying preconditioning makes the problem much more complicated. There has been recent work~\cite{gatmiry2022convergence} where the authors show convergence of LMC in a Riemannian Manifold with respect to the KL-divergence; however, the assumptions used by the authors is much stronger than what we use.
	Given that we indeed have the correct stationary distribution for the process \eqref{precdiffusion}, the next natural question is whether the law of the process converges to the stationary distribution. In this regard, there has been previous work for similar problems \cite{dalalyan2017further,dalalyan2017theoretical,durmus2016sampling} and we follow in their footsteps.
	
		The following proposition concludes that the continuous time Markov Chain associated with \eqref{precdiffusion} is indeed geometrically converging in the Wasserstein distance where $\Pi$ denotes the stationary distribution.
	\begin{prop}\label{conttimeprop}
		Under Assumptions~\ref{assm1} and~\ref{assm:prec:bds} we have
		\begin{align*}
		W_2\left(\delta_x P_t,\Pi\right)\le  \frac{1}{\sqrt{\kappa_*}}e^{-\kappa t}\left(\left|x-x^*\right| +\left(\frac{p}{m\, \kappa_*}\right)^{1/2}\right)
		\end{align*} 
		where $\kappa$ and $\kappa_*$ are defined immediately following Assumption~\ref{assm:prec:bds}.
	\end{prop}  
 \begin{proof}
     The proof of Proposition~\ref{conttimeprop} is furnished in the Appendix.
 \end{proof}

	Note that this implies we converge exponentially to the stationary distribution with the rate being effected by how far away we are from the mode and there is a \(\sqrt p\) dependence on dimension. Also note that strong convexity is vital for our proof by noting the dependence of $m$ on the bound. Also note that the rate depends on the condition number of $H$, $\kappa_*$. 

	Next we establish convergence bounds for the Euler discretization 
	\begin{align*}
	x_{k+1}=x_k-\gamma H\nabla g(x_k)+\sqrt{2\gamma}H^{1/2}\xi_{k+1}
	\end{align*}
	of \eqref{precdiffusion} to the stationary distribution of \eqref{precdiffusion}. Define the Ito process in the time interval $[0,T]$ as 
	\begin{align}\label{discrprocess}
	dD_t= \sum_{k=0}^{[T/\gamma]}-H\nabla g(D_{k}) I(t)_{[k\gamma,(k+1)\gamma)} dt+\sqrt{2H}\, dB_t.
	\end{align}
Note that the marginals of \eqref{discrprocess} at time points $k\gamma$ have the same law as \eqref{precalgorithm}, that is, \(D_{k\gamma}\overset{d}{=} x_k\), where $\overset{d}{=}$ implies having the same law. Use $\mathbb{P}_{D_t}$ to denote the measure of the random variable $D_t$. \\
Define $L_t=\exp(\alpha\, X^{\mathsf{T}}_tH^{-1}X_t )$
and 
\begin{align}\label{Cstardef}
C^*=\frac{M_H}{6}\, \left(g(x_0)-g(x^*)\right)+\frac{5}{12}\,M^2_H\, MpT.
\end{align}
\begin{theorem}\label{Mainthm}
Under Assumptions~\ref{assm1} and~\ref{assm:prec:bds},  $0<\alpha<\kappa/2$, $\epsilon>0$ with
\[T=\frac{1}{\kappa} \log \left[\frac{2}{\epsilon}\left(\frac{1}{\sqrt{\kappa_*}}\left|x_0-x^*\right|+\frac{1}{\kappa_*}\left(\frac{p}{m}\right)^{1/2}\right)\right],\]
$C^*$ as defined in \eqref{Cstardef}, \[C=\frac{M_H}{\alpha}\left[\frac{3}{2}+2\alpha \, T\,p+\log\left(E\left(L_0\right)+\frac{\alpha \left|\nabla g(0)\right|^2}{\left(2 m- \frac{4\alpha}{m_H}\right)}\, T\right)\right]\]
and \[\gamma <\min \left(\frac{1}{32\, C^*}\left(\sqrt{1+\frac{2^{3/2}\epsilon}{C}}-1\right)^4, \ \frac{\kappa_*}{M M_H}\right),\] we have
\[W_2(\mathbb{P}_{D_t}, \Pi) \le \epsilon.\]
\end{theorem}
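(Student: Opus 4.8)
The plan is to interpose the continuous-time diffusion \eqref{precdiffusion} between $\mathbb{P}_{D_T}$ and $\Pi$ and use the triangle inequality
\[
W_2(\mathbb{P}_{D_T},\Pi)\ \le\ W_2(\mathbb{P}_{D_T},\,\delta_{x_0}P_T)\ +\ W_2(\delta_{x_0}P_T,\,\Pi),
\]
where $\delta_{x_0}P_T$ is the law at time $T$ of the solution of \eqref{precdiffusion} started at $x_0$, and to bound each summand by $\epsilon/2$. The second summand is handled at once by Proposition~\ref{conttimeprop}: the prescribed $T$ makes $e^{-\kappa T}=\tfrac{\epsilon}{2}\big(\tfrac{1}{\sqrt{\kappa_*}}|x_0-x^*|+\tfrac{1}{\kappa_*}(p/m)^{1/2}\big)^{-1}$, and since $\tfrac{1}{\sqrt{\kappa_*}}\big(|x_0-x^*|+(p/(m\kappa_*))^{1/2}\big)$ is exactly the bracket inside $T$, the estimate of Proposition~\ref{conttimeprop} collapses to $\epsilon/2$.

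For the discretization term I would follow the relative-entropy route: put the interpolation \eqref{discrprocess} and the diffusion \eqref{precdiffusion} on a common Wiener space with common start, and bound the relative entropy of the path laws by Girsanov's theorem, which here gives $\mathrm{KL}=\tfrac14\int_0^T\mathbb{E}\big[(\nabla g(D_{\lfloor s/\gamma\rfloor\gamma})-\nabla g(D_s))^{\mathsf T}H(\nabla g(D_{\lfloor s/\gamma\rfloor\gamma})-\nabla g(D_s))\big]\,ds$. Using $H\preceq M_H I$, the $M$-Lipschitzness of $\nabla g$, and the one-step bound $\mathbb{E}|D_s-D_{\lfloor s/\gamma\rfloor\gamma}|^2\le \gamma^2M_H^2\,\mathbb{E}|\nabla g(D_{\lfloor s/\gamma\rfloor\gamma})|^2+2\gamma pM_H$, together with a discrete descent (Taylor) lemma along the Euler chain — valid under $\gamma<\kappa_*/(MM_H)$ — that controls $\sum_k\gamma\,\mathbb{E}|\nabla g(D_{\gamma k})|^2$, one arrives at $\mathrm{KL}\le 32\,C^*\gamma$ with $C^*$ as in \eqref{Cstardef} (the fractional constants $\tfrac16,\tfrac5{12}$ being exactly what the step-size terms contribute; this is the discrete analogue of $\int_0^T\mathbb{E}|\nabla g(X_t)|^2\,dt\le\tfrac{1}{m_H}(g(x_0)-g(x^*))+\tfrac{MM_H}{m_H}pT$). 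By monotonicity of relative entropy under the time-$T$ marginalization the same bound holds for $\mathrm{KL}(\mathbb{P}_{D_T}\,\|\,\delta_{x_0}P_T)$.

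To convert this into a Wasserstein bound I would use a weighted transportation-cost inequality of Bolley--Villani type, which for a reference $\nu$ with $\int e^{\lambda|x|^2}d\nu<\infty$ bounds $W_2(\mu,\nu)$ by a multiple of $\big(\tfrac1\lambda(\tfrac32+\log\int e^{\lambda|x|^2}d\nu)\big)^{1/2}\big(\mathrm{KL}^{1/2}+(\tfrac12\mathrm{KL})^{1/4}\big)$, applied with $\nu=\delta_{x_0}P_T$ and $\lambda=\alpha/M_H$; since $|x|^2\le M_H\,x^{\mathsf T}H^{-1}x$, the exponential moment is bounded by $\mathbb{E}[L_T]$. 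Itô's formula applied to $L_t=\exp(\alpha X_t^{\mathsf T}H^{-1}X_t)$ gives $\tfrac{d}{dt}\mathbb{E}L_t=\mathbb{E}\big[L_t\big(-2\alpha X_t^{\mathsf T}\nabla g(X_t)+2\alpha p+4\alpha^2 X_t^{\mathsf T}H^{-1}X_t\big)\big]$; using $X_t^{\mathsf T}\nabla g(X_t)\ge m|X_t|^2-|\nabla g(0)|\,|X_t|$ and $X_t^{\mathsf T}H^{-1}X_t\le\tfrac1{m_H}|X_t|^2$, then completing the square in $|X_t|$ (which is where the restriction $0<\alpha<\kappa/2$ and the term $\tfrac{\alpha|\nabla g(0)|^2}{2m-4\alpha/m_H}$ come from) and applying Grönwall, one gets $\log\mathbb{E}[L_T]\le 2\alpha Tp+\log\!\big(E(L_0)+\tfrac{\alpha|\nabla g(0)|^2}{2m-4\alpha/m_H}T\big)$, so the constant in front of the Bolley--Villani brackets is controlled by $C$ as defined in the statement (the $\tfrac32$ absorbing the passage between $X_t$ and the interpolation $D_t$). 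Assembling the three ingredients yields an estimate of the shape $W_2(\mathbb{P}_{D_T},\delta_{x_0}P_T)\le\tfrac{C}{4\sqrt2}\big((32C^*\gamma)^{1/2}+2(32C^*\gamma)^{1/4}\big)$; imposing that this be $\le\epsilon/2$, i.e. completing the square in the variable $(32C^*\gamma)^{1/4}$, gives precisely $\gamma<\tfrac{1}{32C^*}\big(\sqrt{1+2^{3/2}\epsilon/C}-1\big)^4$, while $\gamma<\kappa_*/(MM_H)$ is retained from the descent lemma and the moment recursion. Combining the two halves gives $W_2(\mathbb{P}_{D_T},\Pi)\le\epsilon$.

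I expect the exponential-moment control to be the main obstacle. One must (i) justify the Girsanov change of measure (a Novikov-type integrability condition, itself an exponential-moment statement), and (ii) run the Itô--Grönwall estimate so that the dissipative term $-\alpha(2m-\tfrac{4\alpha}{m_H})|X_t|^2$ genuinely dominates the cross term $2\alpha|\nabla g(0)|\,|X_t|$ and the Itô correction $4\alpha^2 X_t^{\mathsf T}H^{-1}X_t$ — this is what pins down the admissible $\alpha$ and produces the denominator $2m-\tfrac{4\alpha}{m_H}$. A secondary point is that $H$ need not commute with $\nabla^2 g$, so strong convexity is not directly usable after multiplying by $H$; working throughout in the $H^{-1}$-weighted inner product resolves this and is exactly why the condition number $\kappa_*=m_H/M_H$ and the $M_H$ factors surface in $C$, $C^*$ and in the choice of $T$. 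The remaining work is bookkeeping to make every constant line up in the stated closed form.
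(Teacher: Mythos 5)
Your proposal follows essentially the same route as the paper's proof: the triangle inequality through the law of the diffusion \eqref{precdiffusion} at time $T$, Proposition~\ref{conttimeprop} for the mixing half, and, for the discretization half, the Girsanov relative-entropy bound (Proposition~\ref{KLprop} with the gradient-sum estimate) converted to $W_2$ via the weighted transport inequality of Lemma~\ref{wassKLlemma} with exponential moments controlled by the It\^o--Gr\"onwall argument for $L_t$ (Lemma~\ref{MomentLemma}), the step-size threshold then coming from the same quadratic in $(KL/2)^{1/4}$. One bookkeeping remark: the Girsanov step gives $KL\le C^*\gamma$ rather than $32\,C^*\gamma$ (the $32$ only enters when solving the quadratic), and in fact your assembled bound $\frac{C}{4\sqrt{2}}\left((32C^*\gamma)^{1/2}+2(32C^*\gamma)^{1/4}\right)$ is identically $C\left[(C^*\gamma)^{1/2}+(C^*\gamma/2)^{1/4}\right]$, so your final derivation is consistent with the correct constant.
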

\begin{proof}
    The proof of Theorem~\ref{Mainthm} is furnished in the appendix.
\end{proof}

\begin{remark}
    Note that the smaller we consider $\epsilon$, either $T$ needs to be increased or $\gamma$ needs to be decreased.
    This implies that for large time horizon $T$ and small step size we shall  sample from the target distribution $\Pi$ with a small error. 
\end{remark}
\begin{remark}
    Note that the selection of the step size $\gamma$ is a sensitive problem for if the step-size is selected too small the algorithm needs time to explore the space. However, with the step size too large the algorithm returns incorrect results. Hence there should be an optimal step size which should vary depending the nature of the problem and preconditioning matrix used. 
\end{remark}
\section{Examples and Simulations}\label{exmpsim}
We consider three different examples in the following sections to exhibit the properties of preconditioned LMC as espoused by us in the previous sections. In all three examples we shall consider a preconditioning as defined by \[\mathcal{T}=\begin{bmatrix}
1 & \rho & \rho^2 &\cdots & \rho^{p-1}\\
\rho & 1 & \rho & \cdots & \rho^{p-2}\\
\vdots & \vdots& \vdots & \cdots & \vdots\\
\rho^{p-1} & \rho^{p-2}& \rho^{p-3} &\cdots & 1
\end{bmatrix}.\]
This is the first order AR(1) matrix where the value of $\rho$ determines it's eigenvalues which are all positive. In our problems we consider different values of $\rho$ and exhibit how it affects our simulation findings. Our algorithm for each example obeys the following update rule 
\[x_{k+1}=x_k-\gamma \mathcal{T}\nabla g(x)+\sqrt{2\gamma } \mathcal{T}^{1/2}\xi_{k+1}\]
where $\xi_k \sim N(0,I)$ for all $k=1,2,\cdots$ whete $\mathcal{T}$ is as previously espoused.

\subsection{Simulating from a Mixture Gaussian}
We consider the task of sampling from the density 
\[\pi(x)=\frac{1}{2\left(2\pi\right)^{p/2}}\left(e^{-\left|x-a\right|^2 /2}+e^{-\left|x+a\right|^2 /2}\right), \ x\in \mathbb{R}^p\]
Note that in this case 
\begin{align*}
    g(x)&=\frac{1}{2}\left|x-a\right|^2-\log\left(1+e^{-2 x^{\mathsf{T}}a}\right),\\
    \nabla g(x) &= x-a + 2a \, \left(1+e^{2 x^{\mathsf{T}}a}\right)^{-1}\\
    \nabla^2 g(x)&=I - 4 a\,a^{\mathsf{T}}\, e^{2 x^{\mathsf{T}}a}\left(1+e^{2 x^{\mathsf{T}}a}\right)^{-2}
\end{align*}
We note that $g(\cdot)$ is Lipschitz with Lipschitz-constant $1$. Also, when $|a|<1$, we have $m=1-|a|^2$. 
We exhibit simulations for different values of $\rho$ for which we plot histograms exhibiting the Central Limit Theorem for the spatial average of the observations generated from sampling. We also plot histograms exhibiting the approximate sampling from mixture normality. We generate $10^4$ observations with $10^3$ replicates for this study. 

Note that the approximate sampling of the marginal needs more samples if $\rho$ is taken closer to $1$. This can be explained by observing the eigenvalues of $\mathcal{T}$. As  can be found in previous literature~\cite{trench1999asymptotic}, the spectra of matrices of the form $\mathcal{T}$ is of the form  \[\lambda_{kn}=\frac{1-\rho^2}{1-\rho \cos\theta_{kn}+\rho^2}\] where $(k-1)\pi/n<\theta<k\pi/n$. Therefore as $\rho$ is taken closer to $1$, the eigenvalues of the matrix $\mathcal{T}$ become smaller and hence more data is needed to achieve the same error. 
\FloatBarrier
\begin{figure}
\captionsetup[subfigure]{justification=centering}
\centering
\subfloat{
\includegraphics[width=15cm, height=4cm]{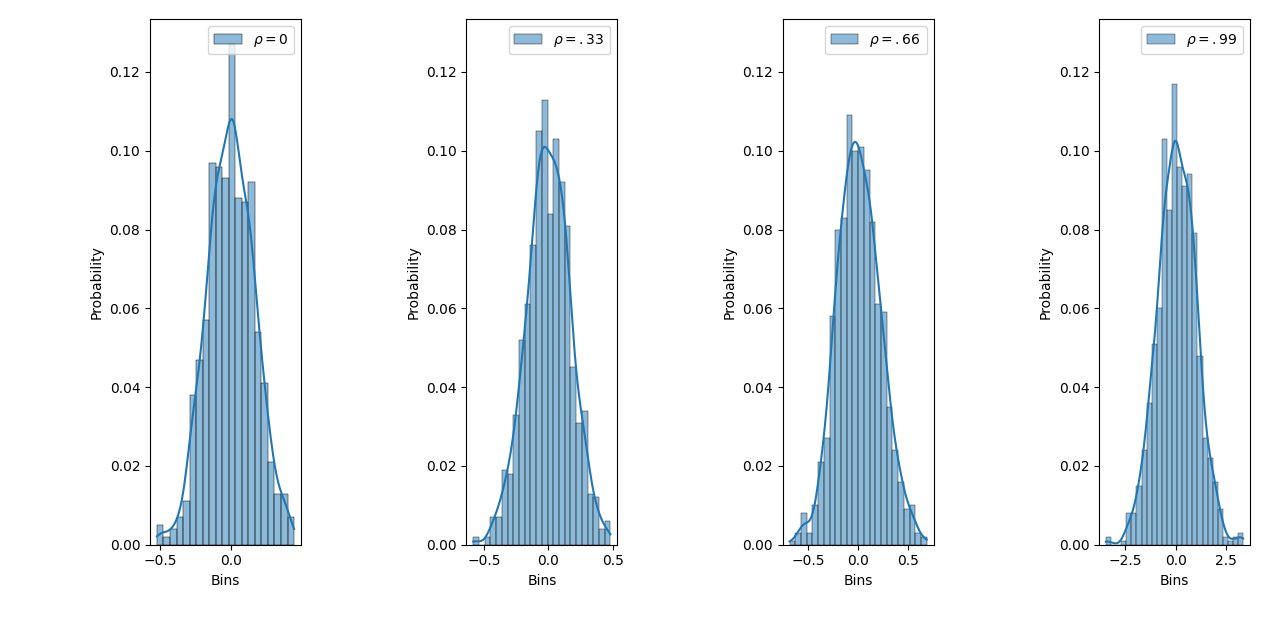}
}

\subfloat{
\includegraphics[width=15cm, height=4cm]{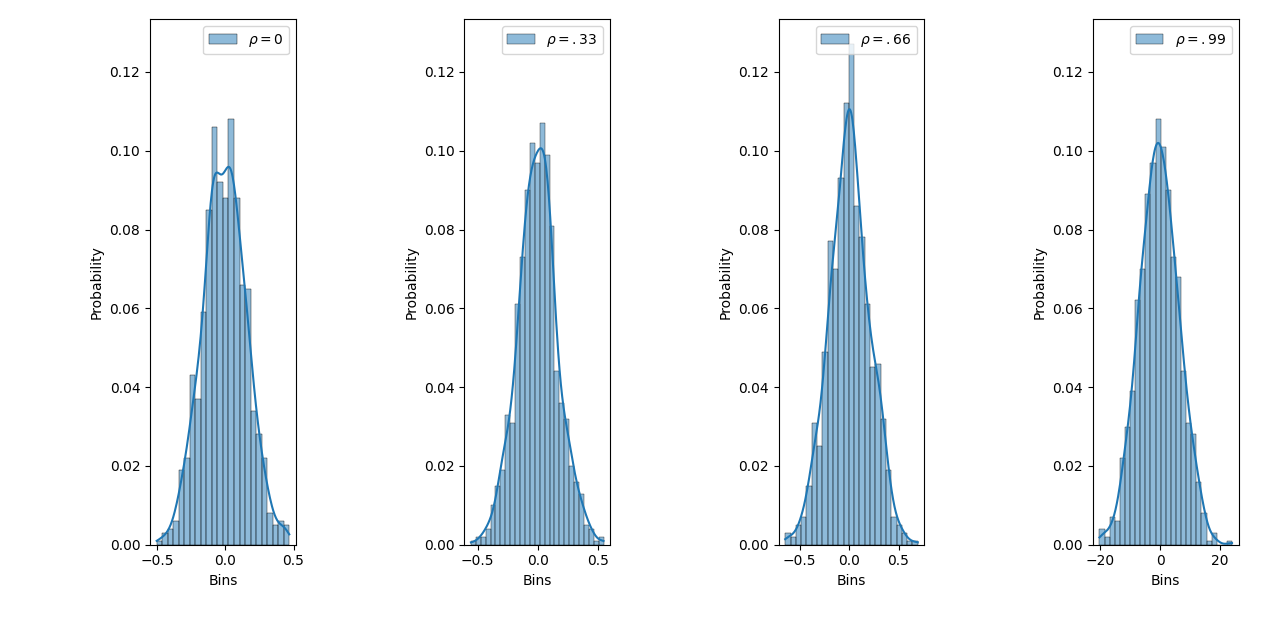}
}
	\caption{Example for the CLT for different values of $\rho$ as per the first and third coordinate projections resepectively. }
	\label{fig:fig1}	
\end{figure}
\FloatBarrier
\begin{figure}
\captionsetup[subfigure]{justification=centering}
\centering
\subfloat{
\includegraphics[width=15cm, height=4cm]{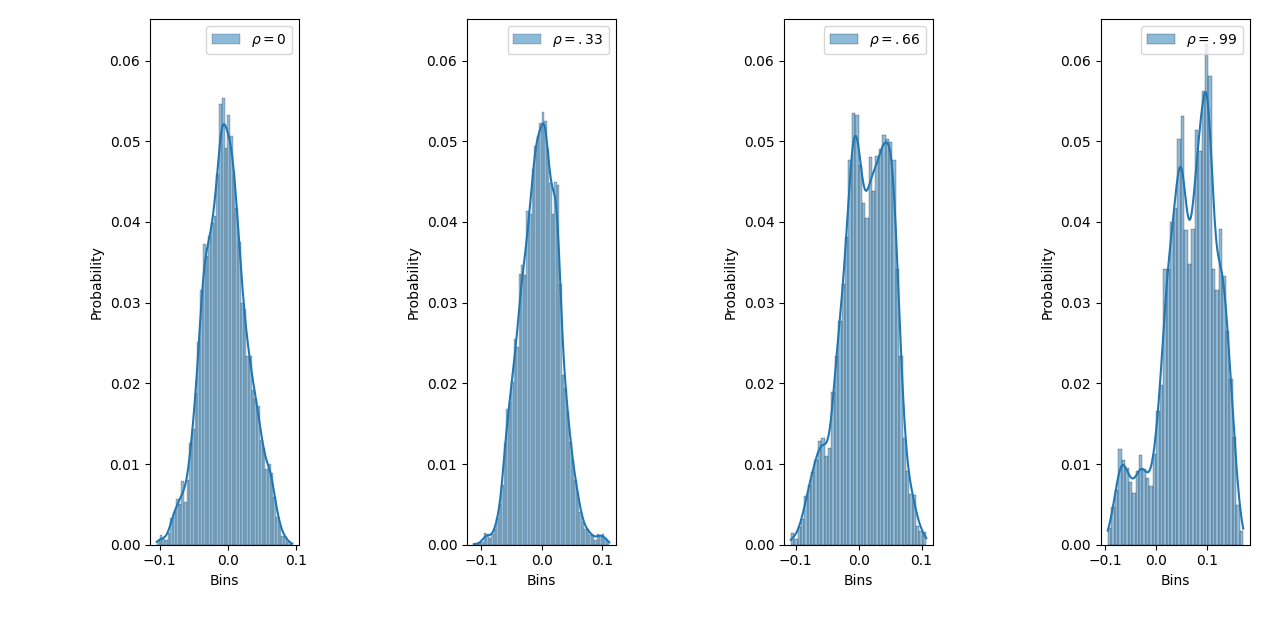}
}

\subfloat{
\includegraphics[width=15cm, height=4cm]{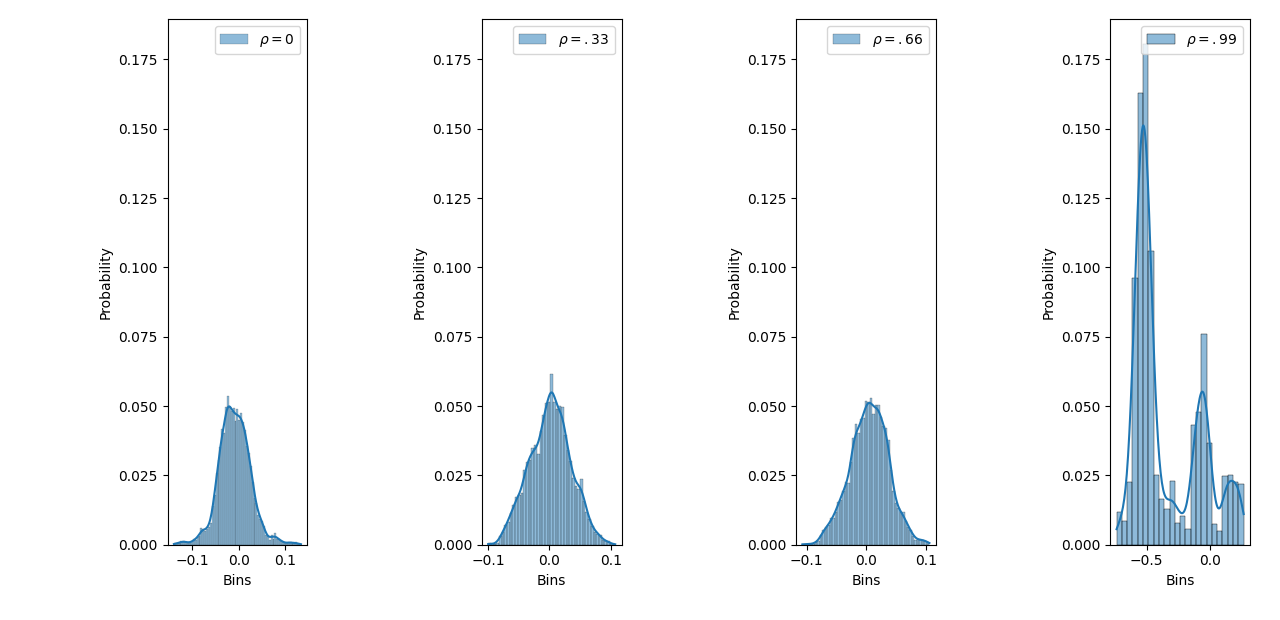}
}
	\caption{Approximate samples from the first and third coordinate marginals of the target distribution for different values of $\rho$.}
	\label{fig:fig2}	
\end{figure}
\FloatBarrier
\subsection{Simulating from Gaussian-Cosine distribution}
Consider the problem of simulating from the density \[\pi(x)\propto \exp\left(-\frac{1}{2}\left|x\right|^2+\lambda_1 \cos\left|x\right|\right)\] where $0<\lambda_1<1$. Note that this distribution has no closed form expression and hence sampling from this probability measure shall require non-trivial techniques. We shall use the preconditioned LMC algorithm to sample from this distribution. Note that, for this problem, 
\[g(x)=\frac{1}{2}\left|x\right|^2-\lambda_1 \cos\left|x\right|.\] Hence we have
\begin{align*}
\nabla g(x)&=x-\lambda_1 \sin\left|x\right|\,\frac{x}{\left|x\right|},\\
\nabla^2 g(x)&=\left(1-\lambda_1\, \frac{\sin\left|x\right|}{\left|x\right|}\right) I-\lambda_1 \frac{\cos\left|x\right|\left|x\right|-\sin\left|x\right|}{\left|x\right|^3}xx^{\mathsf{T}}.
\end{align*}
Therefore we have $M=1+\lambda_1$ which is the Lipschitz constant and $m=1-\lambda_1$ which is the coefficient of strong convexity. This is easy to see by finding the eigenvalues of the matrix $\nabla^2 g(x)$ and upper and lower bounding them. We simulate histograms of the distributions simulated for different preconditioning matrices. We consider $10^4$ iterations with $10^3$ replications for each simulation. Histograms are constructed with the average of the estimates which give the simulation results for the CLT and are plotted in Figure~\ref{fig:fig3}. In Figure~\ref{fig:fig4} we exhibit observations which are approximately sampled from $\Pi$ and draw histograms for the same. We note the dominant quadratic trend in the curve and also the fact that for $\rho$ values close to $1$, one sees that more iterations are needed to get the desired result. This is due to the fact that as $\rho$ is considered closer to $1$, the smallest eigenvalue of $\mathcal{T}$ goes to $0$ and hence more iterations are needed to detect the small gap.
\FloatBarrier
\begin{figure}
\captionsetup[subfigure]{justification=centering}
\centering
\subfloat{
\includegraphics[width=15cm, height=4cm]{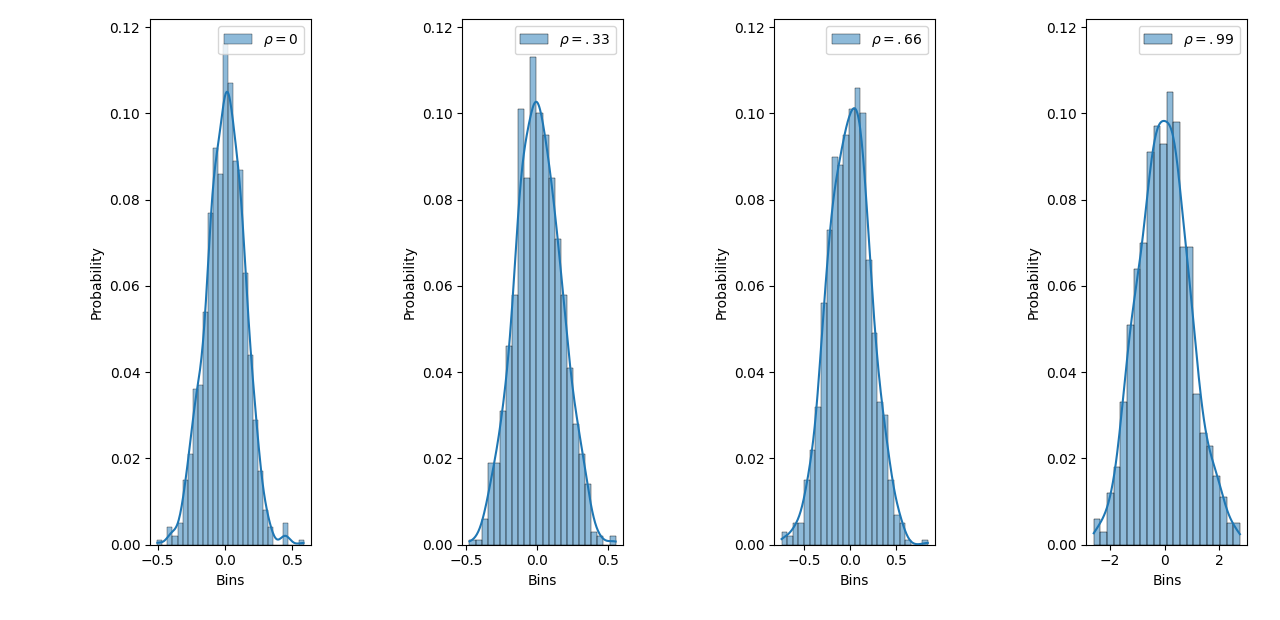}
}

\subfloat{
\includegraphics[width=15cm, height=4cm]{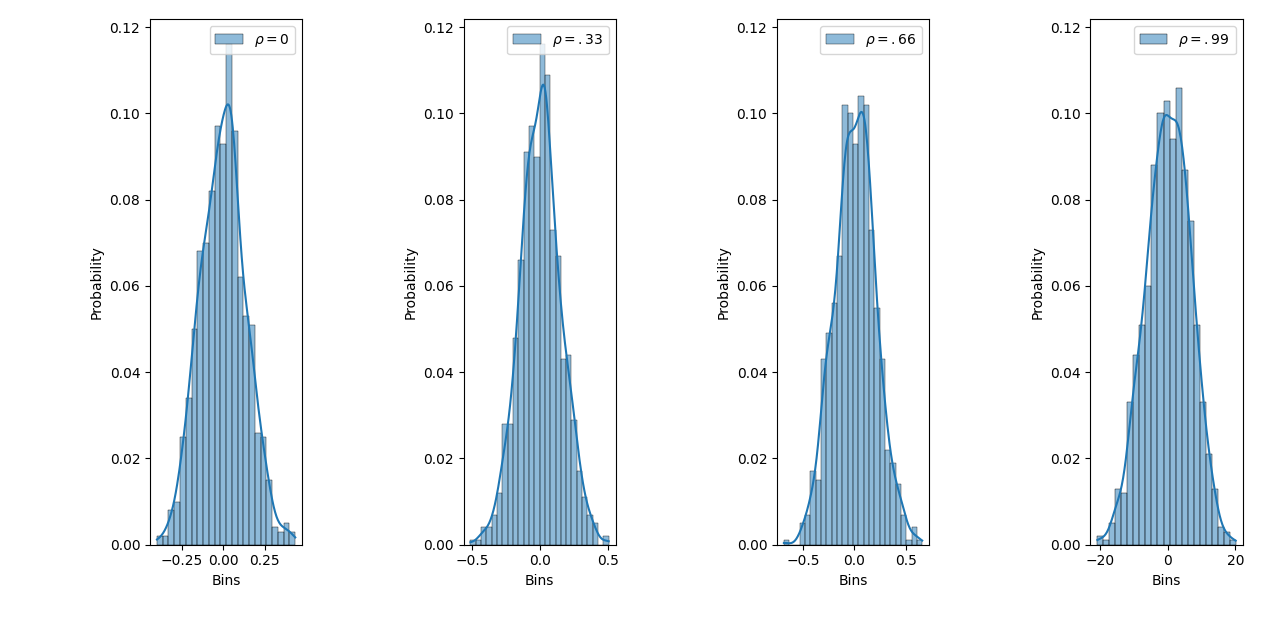}
}
	\caption{Example for the CLT for different values of $\rho$ as per the first and third coordinate projections resepectively. }
	\label{fig:fig3}	
\end{figure}
\FloatBarrier
\begin{figure}
\captionsetup[subfigure]{justification=centering}
\centering
\subfloat{
\includegraphics[width=15cm, height=4cm]{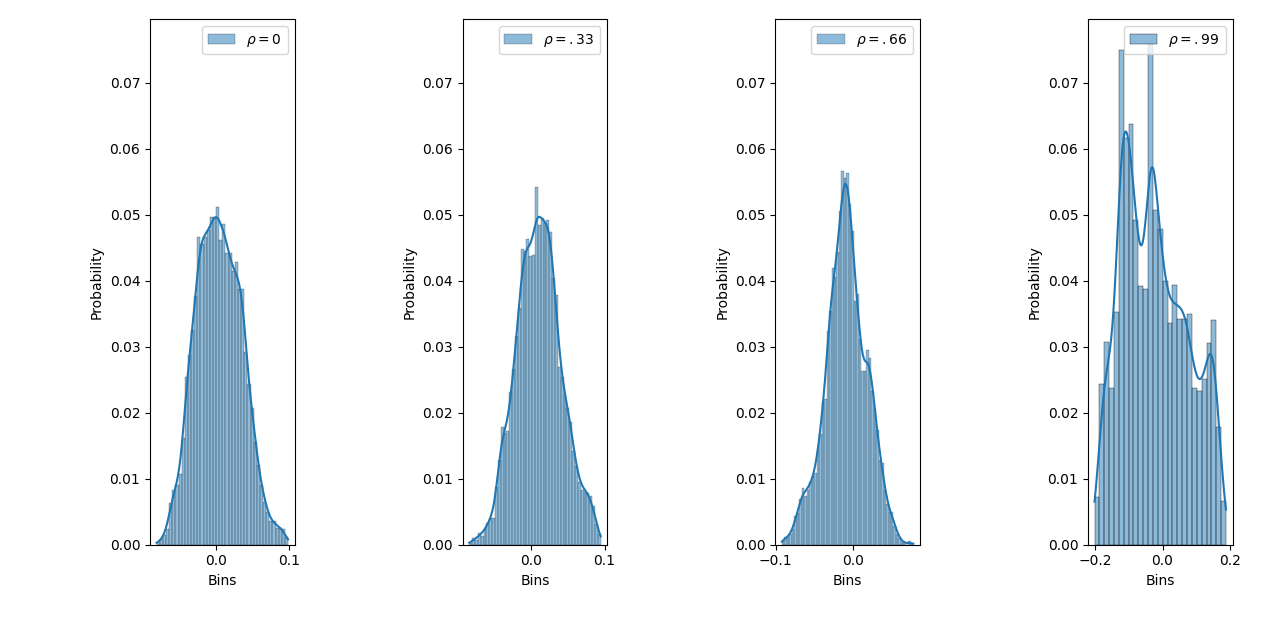}
}

\subfloat{
\includegraphics[width=15cm, height=4cm]{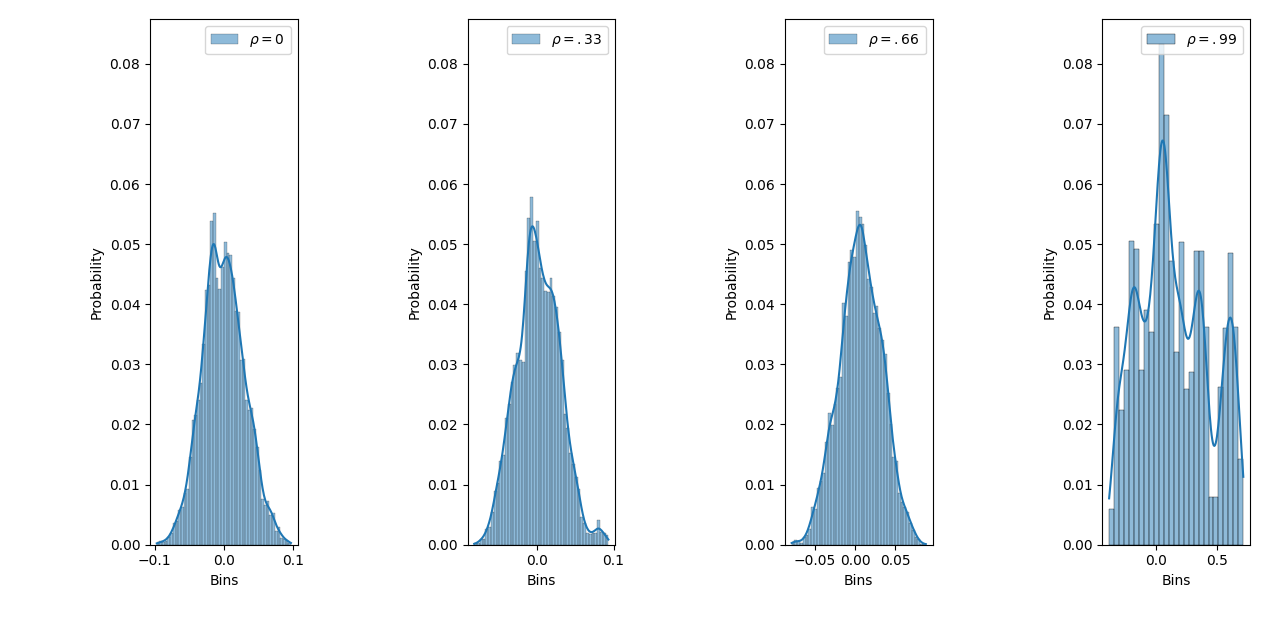}
}
	\caption{Approximate samples from the first and third coordinate marginals of the target distribution for different values of $\rho$. }
	\label{fig:fig4}	
\end{figure}
\FloatBarrier
\subsection{Simulating from a Reinforcement Learning Setup}
\begin{figure}[!ht]
    \centering
    \includegraphics[width=12cm]{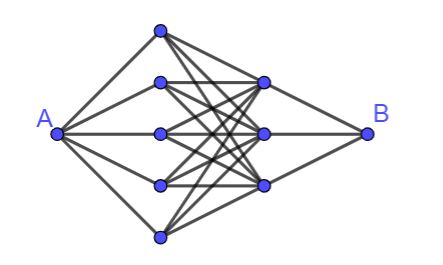}
	\caption{Traveller's Paths}
	\label{fig:figsomething}		 
\end{figure}
For this example, we consider a reinforcement learning setup where we examine the problem of a person staring from home to reach a particular destination. He traverses multiple paths while travelling to his destination and at each path he incurs a cost. This cost can be both positive and negative. The traveller wants to select a path with a reasonable cost. 

To frame the problem more mathematically consider Figure~\ref{fig:fig2}, where A is the stating point or home and B is the destination. Figure~\ref{fig:fig2} represents a graph $G$ with vertex set $V$ and edge set $E$. $\theta_{e}$ denotes the cost of traversing edge $e \in E$. We have a cutoff cost $M$, which is indicative of a desired upper bound of the total cost of travel. Our problem is selection of paths with reasonable cost. Hence we consider the model
\begin{align*}
    y_t\mid \theta \sim Ber\left(\frac{1}{1+\exp\left(\sum_{e\in x_t} \theta_e -M\right)}\right); \ t=1,2,3,\cdots, n
\end{align*}
where $t$ denotes each  instance and $n$ is the total number of  instances. $x_t$ is indicative of the path from A to B at instance $t$. $\theta$ denotes the set of all $\theta_e$ arranged as a vector. That is if we consider a number $e$ denoting each edge, then $\theta=(\theta_1,\theta_2,\cdots,\theta_{|E|})$. Here $|E|$ denotes the cardinality of the edge set. Also instances are considered to be independent of each other. We consider $\theta_e \overset{iid}{\sim} N(0,\sigma^2)$ for $e \in E$. This describes the distribution of the cost of traversing a path. We can also see that each path $x_t$ can be denoted by a vector. Let us consider with minor abuse of notation that $x_t$ is a vector and denote $x_t=(x_{t1},x_{t2},\cdots,x_{t|E|})$ where $x_{t,e}=1$ if edge $e \in E$ is included in the path. Hence we have 
\begin{align*}
    y_t \mid \theta \sim Ber\left(\frac{1}{1+\exp\left(x^{\mathsf{T}}_t \theta -M\right)}\right); \ t=1,2,\cdots,n.
\end{align*}
Hence this reduces to the Bayesian logistic regression setting. Now the question we are interested in is what is the distribution of the costs of traversing the edges given the data for all the instances. This enables us in finding the edge with the lowest average cost. We use $y$ to denote all the $y_t$ i.e., the vector of the feasibility at each instance and $X=[x_1,x_2,\cdots, x_n]_{|E|\times n}$. We have
\begin{align*}
   p( \theta \mid y,X)  \propto \left[\prod_{t=1}^{n} \left(\frac{1}{1+\exp\left(x^{\mathsf{T}}_t \theta -M\right)}\right)^{y_t} \left(\frac{\exp\left(x^{\mathsf{T}}_t \theta -M\right)}{1+\exp\left(x^{\mathsf{T}}_t \theta -M\right)}\right)^{1-y_t}\right]\ \exp\left(-\frac{1}{2\sigma^2}\left|\theta\right|^2\right).
\end{align*}
Thus we have
\begin{align*}
    \log p( \theta \mid y,X)=C+\sum_{t=1}^{n} \left(1-y_t\right)\left(x^{\mathsf{T}}_t \theta -M\right)-\sum_{t=1}^{n}\log \left(1+\exp\left(x^{\mathsf{T}}_t \theta -M\right)\right)-\frac{1}{2\sigma^2}\left|\theta\right|^2
\end{align*}
where $C$ is a constant independent of $\theta$. Hence we get
\begin{align*}
    \nabla \log p( \theta \mid y,X)=\sum_{t=1}^{n} \left(1-y_t\right)x_t-\sum_{t=1}^{n}\frac{\exp\left(x^{\mathsf{T}}_t \theta -M\right)}{1+\exp\left(x^{\mathsf{T}}_t \theta -M\right)}\,x_t-\frac{1}{\sigma^2}\,\theta
\end{align*}
and 
\begin{align*}
    \nabla^2 \log p( \theta \mid y,X)=-\sum_{t=1}^{n}\frac{\exp\left(x^{\mathsf{T}}_t \theta -M\right)}{\left(1+\exp\left(x^{\mathsf{T}}_t \theta -M\right)\right)^2}\,x_t x^{\mathsf{T}}_t-\frac{1}{\sigma^2}I.
\end{align*}
This implies that the negative log-likelihood is strongly convex. We perform simulation studies with $10^4$ iterations and $10^3$ replications. The findings are very similar to the previous examples.

\FloatBarrier
\begin{figure}
\captionsetup[subfigure]{justification=centering}
\centering
\subfloat{
\includegraphics[width=15cm, height=4cm]{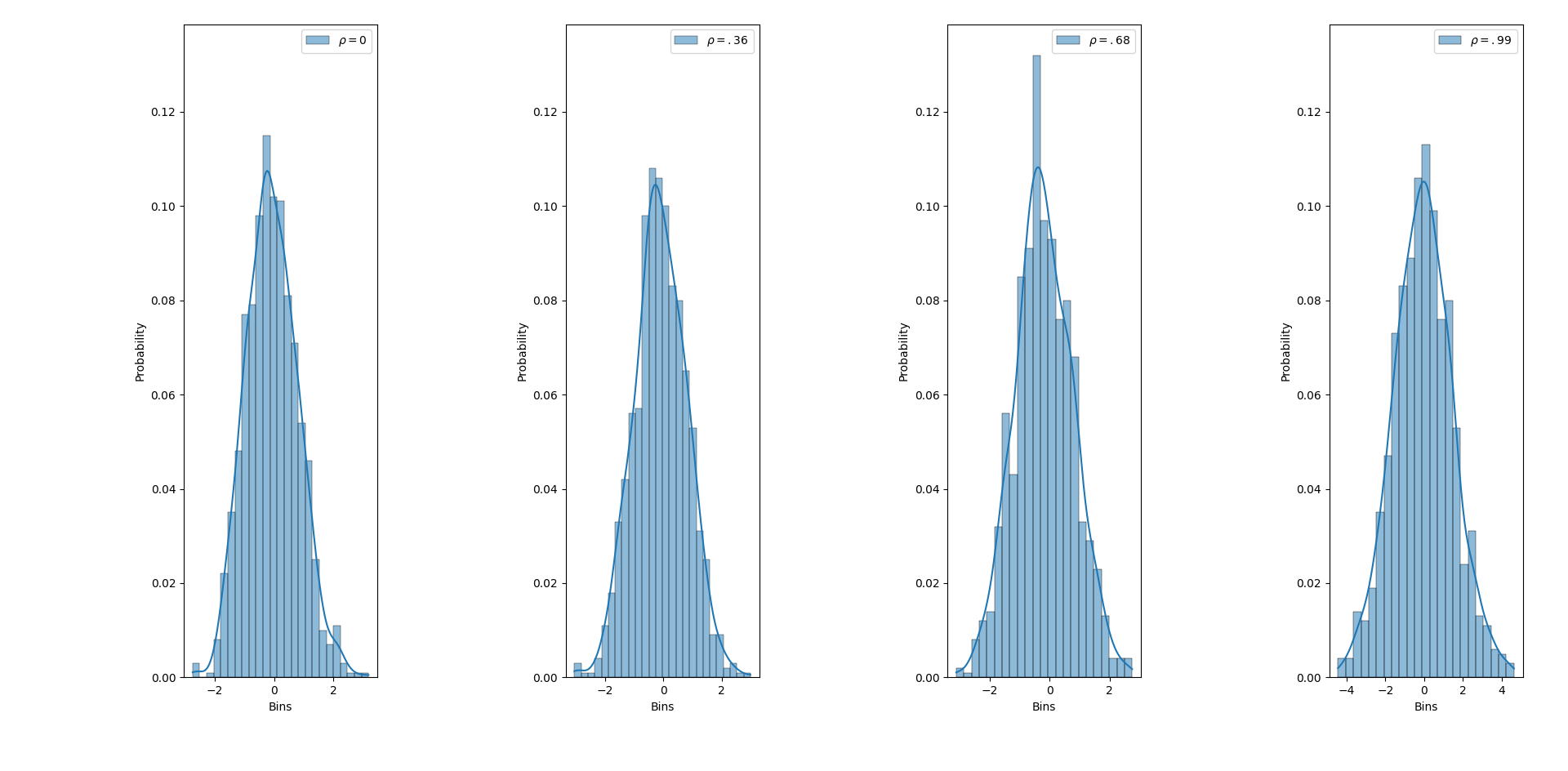}
}

\subfloat{
\includegraphics[width=15cm, height=4cm]{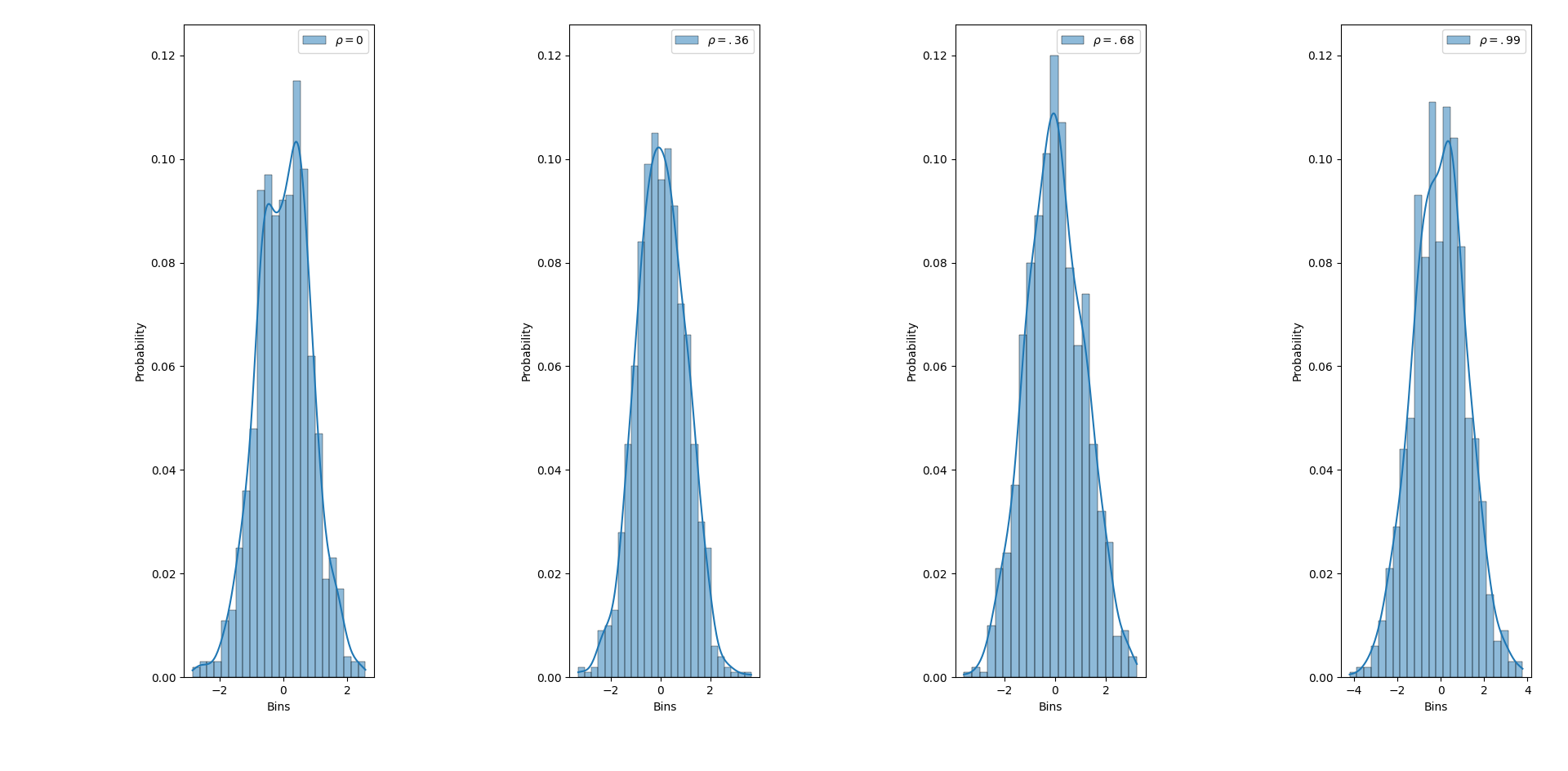}
}
	\caption{Example for the CLT for different values of $\rho$ as per the first and the fifth coordinate.}
	\label{fig:fig5}	
\end{figure}
\FloatBarrier
\begin{figure}
\captionsetup[subfigure]{justification=centering}
\centering
\subfloat{
\includegraphics[width=15cm, height=4cm]{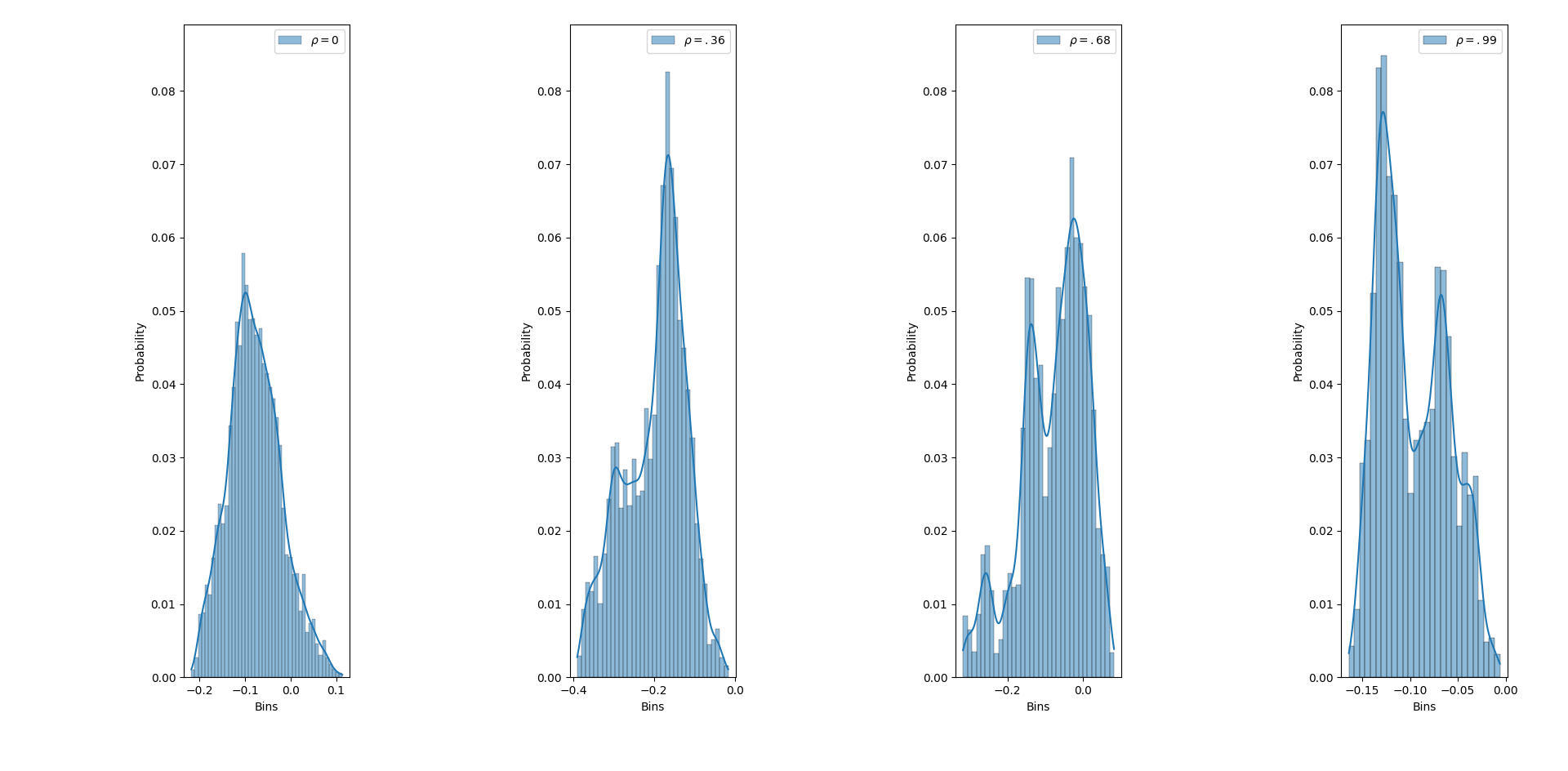}
}

\subfloat{
\includegraphics[width=15cm, height=4cm]{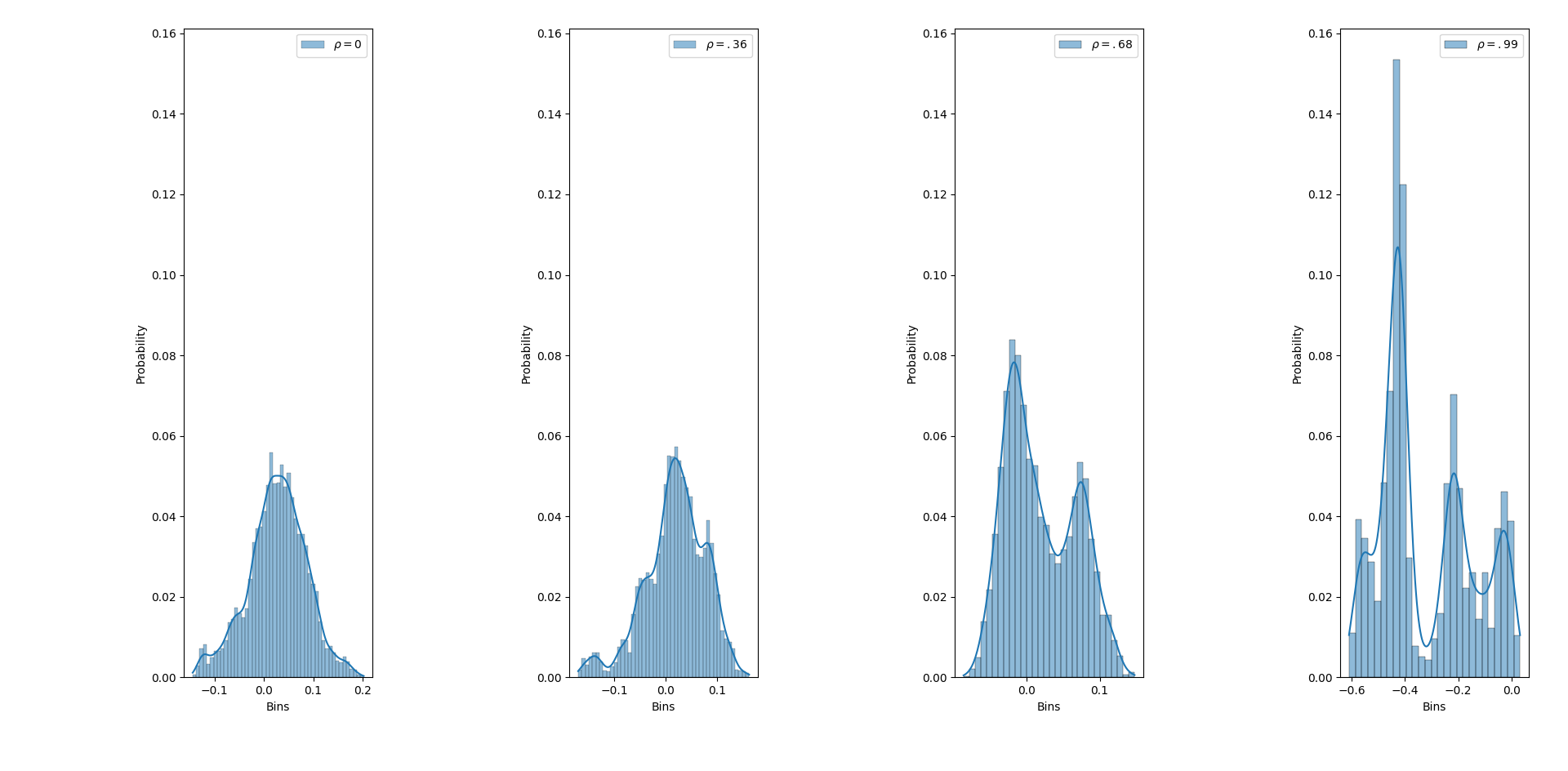}
}
	\caption{Approximate samples from the first and third coordinate marginals of the target distribution for different values of $\rho$.}
	\label{fig:fig6}	
\end{figure}
\FloatBarrier

\subsection{Concluding remarks.} 
In this paper we study the preconditioned LMC algorithm which is widely used by practitioners of fast sampling. The fast sampling bounds for the algorithm under strong conditions for KL-divergence has been settled. Also there has been numerous works in this area where there is no preconditioning. Given this, we make the following comments:
\begin{enumerate}
    \item We establish the convergence of the preconditioned LMC algorithm for general preconditioning matrices to a stationary distribution dependent on the step size in total variation. This is given in Theorem~\ref{thm:geom:erg}. 
    \item In addition to the previous point we derive \textit{explicit} convergence bounds of the algorithm to the stationary distribution dependent on the step size in total variation. This can be viewed in Proposition~\ref{prop:lwr:bnd}.
    \item We derive a CLT for preconditioned LMC samples which may be used for the purposes of statistical inference. This is exhibited in Theorem~\ref{thm:prec:clt}.
    \item In addition we also exhibit how we can use our results to conduct inference on the mode of the target distribution and how the permissible range for $\gamma$ changes when the preconditioning matrix is no longer spatially varying. This is shown in Corollary~\ref{coro:geom:erg} and Propositions~\ref{coro:geom:erg}, \ref{prop:clt:constant:H}.
    \item We establish a fast sampling bound of the preconditioned LMC algorithm to the target distribution when the preconditioning is spatially invariant, in the Wasserstein distance. This may be viewed in Theorem~\ref{Mainthm}.
    \item Simulation experiments exhibit that the fast sampling and CLT accuracy is dependent on the bounds of the preconditioning. The dependence however seems somewhat robust to minor changes and fast sampling procedures seem to exhibit more sensitivity to this change than the CLT.
    \item One interesting question is how to extend the problem of approximate sampling in the regime of strong convexity with general preconditioning matrices. This is especially challenging in a Riemannian Manifold where the concept of convexity is non-trivial.
\end{enumerate}

\bibliographystyle{plain}
\bibliography{bibfile1}

\begin{appendices}
\section{Discrete Time Analysis on the
Preconditioned Langevin Monte Carlo Algorithm }\label{sec:disc:time}

\begin{lemma}\label{drift:intermediate}
Let Assumptions~\ref{assm1} and~\ref{assm:prec:bds}	hold and let 	\[ \frac{m\, m_H}{M^2_H\, M^2}\left(1-\sqrt{1-\frac{M^2_H\, M^2\, \beta}{m^2_H\, m^2\, (1+\beta)}} \right)<\gamma< \frac{m\, m_H}{M^2_H\, M^2}\left(1+\sqrt{1-\frac{M^2_H\, M^2\, \beta}{m^2_H\, m^2\, (1+\beta)}} \right). \] Then
\begin{align*}
\mathbb{E}\left[(x_{k+1}-x^*)H^{-1}(x_k)(x_{k+1}-x^*)\mid x_k\right]\le \left(1-2\gamma \, m\,m_H +\gamma^2\, M^2_H\, M^2\right) V(x_k)+2\gamma p.
\end{align*}
\begin{proof}[Proof]
	It can be seen that in the expectation of the quadratic expression, the matrix $H(x_k)$ is fixed given the sigma-field of $x_k$. Therefore 
	\begin{align*}
	&\mathbb{E}\left[(x_{k+1}-x^*)H^{-1}(x_k)(x_{k+1}-x^*)\mid x_k\right]\\
	& \le \mathbb{E}\Bigg\{\left[(x_k-x^*)-\gamma H(x_k)\nabla g(x_k)+\sqrt{2\gamma}H^{1/2}(x_k)\xi_{k+1}\right]^{\mathsf{T}}H^{-1}(x_k)\\
	&\quad \quad \quad \quad \left[(x_k-x^*)-\gamma H(x_k)\nabla g(x_k)+\sqrt{2\gamma}H^{1/2}(x_k)\xi_{k+1}\right]\mid x_k\Bigg\}\\
	&=(x_k-x^*)^{\mathsf{T}}H^{-1}(x_k)(x_k-x^*)-2\gamma (x_k-x^*)^{\mathsf{T}}\nabla g(x_k)+\gamma^2 \nabla g(x_k)H(x_k)\nabla g(x_k)+2\gamma \mathbb{E}(\xi^{\mathsf{T}}_{k+1}\xi_{k+1})\\
	&=V(x_k)-2\gamma (x_k-x^*)^{\mathsf{T}}\nabla g(x_k)+\gamma^2 \nabla g(x_k)H(x_k)\nabla g(x_k)+2\gamma\,p.
	\end{align*}
	Observe that,
	\begin{align*}
	(x_k-x^*)^{\mathsf{T}}\nabla g(x_k)&\ge m\, |x_k-x^*|^2\\
	&\ge m\, m_H \,(x_k-x^*)H^{-1}(x_k)(x_k-x^*)\\
	&=m\,m_H\, V(x_k)
	\end{align*}
	and
	\begin{align*}
	&\nabla g(x_k)^{\mathsf{T}}H(x_k)\nabla g(x_k)\\
	&=(x_k-x^*)^{\mathsf{T}}\nabla^2 g(\xi_{x_k,x^*})H(x_k)\nabla^2 g(\xi_{x_k,x^*})(x_k-x^*)\\
	&\le M_H \|\nabla^2 g(\xi_{x_k,x^*})(x_k-x^*)\|^2\\
 & \le M_H\, M^2 \|(x_k-x^*)\|^2\\
 &\le M^2_H\, M^2 V(x_k).
	\end{align*}
	Therefore, 
\begin{align*}
	&\mathbb{E}\left[(x_{k+1}-x^*)H^{-1}(x_k)(x_{k+1}-x^*)\mid x_k\right]\\
	&\le V(x_k)-2\gamma m\,m_H\, V(x_k)+\gamma^2\,M^2 \, M^2_H\, V(x_k)+2\gamma p
\end{align*}	
and the result follows.
\end{proof}
\end{lemma}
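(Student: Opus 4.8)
The plan is to expand the quadratic form defining $V(x_{k+1})$, where $V(x)=(x-x^*)^{\mathsf{T}}H^{-1}(x)(x-x^*)$, take the conditional expectation given $x_k$, and then control the resulting deterministic terms using $m$-strong convexity, the $M$-Lipschitz gradient bound, and the two-sided bound $m_H I\le H(x)\le M_H I$ from Assumption~\ref{assm:prec:bds}.

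First I would record the observation that makes the computation clean: $H(x_k)$, $H^{-1}(x_k)$ and $H^{1/2}(x_k)$ are all measurable with respect to the $\sigma$-field generated by $x_k$, which is exactly why the recursion \eqref{precalgorithm} evaluates $H$ at $x_k$ rather than at $x_{k+1}$. Substituting $x_{k+1}-x^*=(x_k-x^*)-\gamma H(x_k)\nabla g(x_k)+\sqrt{2\gamma}\,H^{1/2}(x_k)\xi_{k+1}$ into $(x_{k+1}-x^*)^{\mathsf{T}}H^{-1}(x_k)(x_{k+1}-x^*)$ and taking $\mathbb{E}[\,\cdot\mid x_k]$, the terms linear in $\xi_{k+1}$ drop out because $\xi_{k+1}$ is centered and independent of $x_k$; the pure-noise term contributes $2\gamma\,\mathbb{E}[\xi_{k+1}^{\mathsf{T}}H^{1/2}(x_k)^{\mathsf{T}}H^{-1}(x_k)H^{1/2}(x_k)\xi_{k+1}]=2\gamma\,\mathbb{E}[\xi_{k+1}^{\mathsf{T}}\xi_{k+1}]=2\gamma p$, using $H^{1/2}H^{-1}H^{1/2}=I$; and the deterministic remainder is $V(x_k)-2\gamma(x_k-x^*)^{\mathsf{T}}\nabla g(x_k)+\gamma^2\nabla g(x_k)^{\mathsf{T}}H(x_k)\nabla g(x_k)$, using symmetry of $H$ so that $H(x_k)^{\mathsf{T}}H^{-1}(x_k)H(x_k)=H(x_k)$.

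Next I would bound the two deterministic terms against $V(x_k)$. Strong convexity gives $(x_k-x^*)^{\mathsf{T}}\nabla g(x_k)\ge m|x_k-x^*|^2$, and since $H^{-1}(x_k)\le m_H^{-1}I$ we have $V(x_k)\le m_H^{-1}|x_k-x^*|^2$, hence $(x_k-x^*)^{\mathsf{T}}\nabla g(x_k)\ge m\,m_H\,V(x_k)$. For the gradient term, $\nabla g(x^*)=0$ together with $g\in C^2$ lets me write $\nabla g(x_k)$ as an averaged-Hessian image of $x_k-x^*$; that averaged Hessian again lies between $mI$ and $MI$, so $\nabla g(x_k)^{\mathsf{T}}H(x_k)\nabla g(x_k)\le M_H|\nabla g(x_k)|^2\le M_H M^2|x_k-x^*|^2$, and since $H^{-1}(x_k)\ge M_H^{-1}I$ we have $|x_k-x^*|^2\le M_H V(x_k)$, giving $\nabla g(x_k)^{\mathsf{T}}H(x_k)\nabla g(x_k)\le M_H^2 M^2 V(x_k)$. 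Assembling these yields $\mathbb{E}[V(x_{k+1})\mid x_k]\le(1-2\gamma m\,m_H+\gamma^2 M_H^2 M^2)V(x_k)+2\gamma p$.

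I do not expect a genuine obstacle, since the argument is a direct computation; the only points needing a little care are (i) keeping the preconditioner out of the conditional expectation, which is precisely why $H$ is evaluated at $x_k$ and is what lets the identities $H^{1/2}H^{-1}H^{1/2}=I$ and $HH^{-1}H=H$ collapse the noise and gradient contributions, and (ii) the mean-value step for $\nabla g$: since $\nabla g$ is vector valued one should use the integral form $\nabla g(x_k)=\bigl(\int_0^1\nabla^2 g(x^*+t(x_k-x^*))\,dt\bigr)(x_k-x^*)$ rather than a literal single-point Hessian, but the spectral bounds $mI\le\nabla^2 g\le MI$ survive the integration unchanged. Note also that the stated range for $\gamma$ is not used in this particular inequality; it becomes relevant only when this drift bound is iterated into a genuine contraction toward the stationary distribution $\pi_{\gamma}$.
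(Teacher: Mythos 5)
Your proposal is correct and follows essentially the same route as the paper: expand the quadratic form with $H^{-1}(x_k)$ held fixed under the conditioning, let the centered noise kill the cross terms and contribute $2\gamma p$, then bound the inner-product term by $m\,m_H V(x_k)$ via strong convexity and the gradient term by $M_H^2 M^2 V(x_k)$ via the (averaged) Hessian bounds and $m_H I\le H\le M_H I$. Your use of the integral-form mean value theorem for $\nabla g$ is a slightly more careful rendering of the paper's single-point Hessian $\nabla^2 g(\xi_{x_k,x^*})$, and your observation that the stated range of $\gamma$ is not needed for this inequality itself is also accurate.
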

\begin{prop}\label{prop:drift}
	Let Assumptions~\ref{assm1} and~\ref{assm:prec:bds} hold and let 
	\[ \frac{m\, m_H}{M^2_H\, M^2}\left(1-\sqrt{1-\frac{M^2_H\, M^2\, \beta}{m^2_H\, m^2\, (1+\beta)}} \right)<\gamma< \frac{m\, m_H}{M^2_H\, M^2}\left(1+\sqrt{1-\frac{M^2_H\, M^2\, \beta}{m^2_H\, m^2\, (1+\beta)}} \right). \] Then, for the Lyapunov function $V(x)=(x-x^*)^{\mathsf{T}}H^{-1}(x)(x-x^*)$ and algorithm \eqref{generalized:lmc},
	\begin{align*}
	\mathbb{E}\left[V(x_{k+1})\mid x_k\right]\le \tilde{\lambda}\, V(x_k)+b
	\end{align*}
	for some $0<\tilde{\lambda}<1$, $b>0$ and $\gamma>0$.
\end{prop}
\begin{proof}[Proof]
	Our proof shall be two-fold. We shall first reduce the given problem into a simpler one and then solve the simpler problem. Note that 
	\begin{align*}
	&\mathbb{E}\left[V(x_{k+1})\mid x_k\right]\\
	&=\mathbb{E}\left[(x_{k+1}-x^*)^{\mathsf{T}}H^{-1}(x_{k+1})(x_{k+1}-x^*)-(x_{k+1}-x^*)^{\mathsf{T}}H^{-1}(x_k)(x_{k+1}-x^*)\mid x_k\right]\\
	&\quad \quad +\mathbb{E}\left[(x_{k+1}-x^*)^{\mathsf{T}}H^{-1}(x_k)(x_{k+1}-x^*)\mid x_k\right].
	\end{align*}
	Now, for the first term
	\begin{align*}
	&\mathbb{E}\left[(x_{k+1}-x^*)^{\mathsf{T}}H^{-1}(x_{k+1})(x_{k+1}-x^*)-(x_{k+1}-x^*)^{\mathsf{T}}H^{-1}(x_k)(x_{k+1}-x^*)\mid x_k\right]\\
	&\le \mathbb{E}\left[\left|(x_{k+1}-x^*)^{\mathsf{T}}H^{-1}(x_{k+1})(x_{k+1}-x^*)-(x_{k+1}-x^*)^{\mathsf{T}}H^{-1}(x_k)(x_{k+1}-x^*)\right|\mid x_k\right]\\
	&\le \mathbb{E}\left[\left|x_{k+1}-x^*\right|^2\, \left\|H^{-1}(x_{k+1})-H^{-1}(x_k)\right\|_2\mid x_k\right]\\
	&\le  \frac{\beta}{M_H}\,\mathbb{E}\left[\left|x_{k+1}-x^*\right|^2\mid x_k\right] \\
	&\le \beta \,\mathbb{E}\left[\left(x_{k+1}-x^*\right)^{\mathsf{T}}H^{-1}(x_k)\left(x_{k+1}-x^*\right)\mid x_k\right].
	\end{align*}
By using Lemma~\ref{drift:intermediate}, we get
	\begin{align*}
	\mathbb{E}(V(x_{k+1})\mid x_k) &\le\left(1+\beta\right) \left[\left(1-2\gamma \, m\,m_H +\gamma^2\, M^2_H\, M^2\right) V(x_k) +2\gamma p\right].
	\end{align*}
	We finish the proof.
\end{proof}
\begin{coro}\label{coro:drift}
	Let the conditions of Proposition~\ref{prop:drift} hold. For the Lyapunov function $\tilde{V}(x)=V(x)+1$, there exists the following drift condition
	\begin{align*}
	P\tilde{V}(x)\le \tilde{\lambda}\, \tilde{V}(x)+\tilde{b}
	\end{align*}
	with $\tilde{b}=b+1-\tilde{\lambda}$.
\end{coro}
\begin{proof}
	Note that
	\begin{align*}
	P(V+1)(x)&=PV(x)+1\\
	&\le \tilde{\lambda}\, V(x)+b+1\\
	&\le \tilde{\lambda}\, (V(x)+1)+(b+1-\tilde{\lambda}).
	\end{align*}
 The proof is completed. 
\end{proof}

\begin{lemma}\label{lemma:Harris}
	Let Assumptions~\ref{assm1} and \ref{assm:prec:bds} hold. Then, the Markov chain as defined by \eqref{precalgorithm} is aperiodic, irreducible with respect to the Lebesgue measure and Harris recurrent as 
	\[ \frac{m\, m_H}{M^2_H\, M^2}\left(1-\sqrt{1-\frac{M^2_H\, M^2\, \beta}{m^2_H\, m^2\, (1+\beta)}} \right)<\gamma< \frac{m\, m_H}{M^2_H\, M^2}\left(1+\sqrt{1-\frac{M^2_H\, M^2\, \beta}{m^2_H\, m^2\, (1+\beta)}} \right). \]
\end{lemma}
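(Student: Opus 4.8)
The plan is to read off $\mu^{Leb}$-irreducibility, aperiodicity and the smallness of compact sets from the fact that the one-step kernel of \eqref{precalgorithm} is a non-degenerate Gaussian, and then to feed the drift inequality already proved in Corollary~\ref{coro:drift} into the standard drift--minorization machinery to obtain Harris recurrence.

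First I would write down the transition density. Conditionally on $x_k=x$, equation \eqref{precalgorithm} shows that $x_{k+1}\sim N\bigl(x-\gamma H(x)\nabla g(x),\,2\gamma H(x)\bigr)$, and by Assumption~\ref{assm:prec:bds} the covariance obeys $2\gamma m_H I\le 2\gamma H(x)\le 2\gamma M_H I$, hence is invertible. Therefore $P(x,\cdot)$ has the Lebesgue density
\[
p(x,y)=\frac{1}{(4\pi\gamma)^{p/2}\det(H(x))^{1/2}}\exp\left\{-\frac{1}{4\gamma}\bigl(y-x+\gamma H(x)\nabla g(x)\bigr)^{\mathsf{T}}H^{-1}(x)\bigl(y-x+\gamma H(x)\nabla g(x)\bigr)\right\},
\]
which is strictly positive for every $(x,y)\in\mathbb{R}^p\times\mathbb{R}^p$. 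Consequently, for any Borel set $A$ with $\mu^{Leb}(A)>0$ we have $P(x,A)=\int_A p(x,y)\,dy>0$ for all $x$, so the chain is $\mu^{Leb}$-irreducible with maximal irreducibility measure equivalent to Lebesgue measure. Moreover, on any compact set $C$ the exponent above is bounded (since $H^{-1}(x)$ is bounded by Assumption~\ref{assm:prec:bds}, $\nabla g$ is continuous hence bounded over $C$ by Assumption~\ref{assm1}, and $y$ ranges over $C$) and $\det H(x)\le M_H^p$, so $\inf_{x,y\in C}p(x,y)>0$; this yields a minorization $P(x,\cdot)\ge\varepsilon_C\,\nu_C(\cdot)$ for $x\in C$ with $\nu_C$ the normalized Lebesgue measure on $C$ and $\varepsilon_C>0$ (this is precisely the computation behind the constant $\eta$ in \eqref{key:quantities}), so $C$ and, by the same argument, every compact set is small. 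Taking $C$ charged by its own minorizing measure also shows the chain is (strongly) aperiodic; equivalently, $P(x,C)>0$ for $x\in C$ rules out any cyclic decomposition.

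Next I would invoke the drift condition. By Corollary~\ref{coro:drift}, $P\tilde{V}\le\tilde{\lambda}\tilde{V}+\tilde{b}$ with $0<\tilde{\lambda}<1$ throughout the stated range of $\gamma$. Since $H^{-1}(x)\ge M_H^{-1}I$ gives $\tilde{V}(x)\ge M_H^{-1}|x-x^*|^2+1$, the sublevel sets of $\tilde{V}$ are compact, hence small by the previous step; in particular, for $\alpha\in(\tilde{\lambda},1)$ the set $C=\{\tilde{V}\le 2\tilde{b}/(\alpha-\tilde{\lambda})\}$ of \eqref{key:quantities} is small, and for $x\notin C$ one has $\tilde{b}<(\alpha-\tilde{\lambda})\tilde{V}(x)$, whence $P\tilde{V}(x)\le\alpha\tilde{V}(x)$. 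Thus $P\tilde{V}\le\alpha\tilde{V}+\tilde{b}\,\mathbf{1}_C$, which is the geometric drift condition toward a small set. Since the chain is $\mu^{Leb}$-irreducible and $\tilde{V}$ is finite everywhere, the standard drift--minorization results \cite[Theorems 9.1.8, 13.0.1 and 15.0.1]{meyn2012markov} now give that the chain is Harris recurrent (indeed positive Harris recurrent). Combining the three steps, the Markov chain \eqref{precalgorithm} is aperiodic, $\mu^{Leb}$-irreducible and Harris recurrent.

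The substantive input is the drift inequality of Corollary~\ref{coro:drift}: without it one obtains only irreducibility and aperiodicity, not recurrence, and this is exactly where the restriction on $\gamma$ enters. The only mild technical point is the smallness of compact sublevel sets of $\tilde{V}$; I handle it via the explicit positive lower bound on the Gaussian kernel over compacts, which needs merely the boundedness (not continuity) of $H$ together with the local boundedness of $\nabla g$. Everything else is bookkeeping with the definitions in \eqref{key:quantities} and the cited Markov-chain theory.
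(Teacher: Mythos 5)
Your proof is correct and follows essentially the same route as the paper: irreducibility and aperiodicity are read off the everywhere-positive nondegenerate Gaussian transition kernel, and Harris recurrence is obtained from the drift condition of Corollary~\ref{coro:drift} in the stated range of $\gamma$. The only difference is that the paper concludes the Harris step by citing Corollary 4 of \cite{mattingly2002ergodicity}, whereas you spell out the drift--minorization argument (compact sublevel sets of $\tilde{V}$ are small, then Meyn--Tweedie), which is just a more self-contained rendering of the same machinery.
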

\begin{proof}
	We consider three cases separately. We first tackle the question of irreducibility. Note that to establish irreducibility, we need to exhibit that 
	\begin{align*}
	P(x_{k+1}\in A \mid x_k)>0
	\end{align*}
 for any set $A$ with $\lambda^{Leb}(A)>0$. 
	This is trivially true as 
	\[x_{k+1}\mid x_k \sim N\left(\gamma H(x_k)\nabla g(x_k),2\gamma H(x_k) \right) \]
	where $H(\theta)$ is of full rank for any $\theta$. Thus $\{x_k\}_{k\ge 0}$ is indeed irreducible with respect to the Lebesgue measure. Next we show that the chain is aperiodic. This can also be seen very easily as if the chain is not aperiodic, there exists a partition of $\mathbb{R}^p$ as $\{D_i\}_{i=1}^{n}$, such that $\cup_{i=1}^{n} D_i=\mathbb{R}^p$ and $P(x_{k+1} \in D^c_i\mid x_k \in D_i)=1$. This is impossible as \[x_{k+1}\mid x_k \sim N\left(\gamma H(x_k)\nabla g(x_k),2\gamma H(x_k) \right) .\]  Lastly, by Corollary 4 from \cite{mattingly2002ergodicity}, we know  a drift condition implies that the resultant chain is Harris. We then have established the Harris recurrence.
\end{proof}
\begin{coro}\label{coro:exitence:stat}
	For any 	\[ \frac{m\, m_H}{M^2_H\, M^2}\left(1-\sqrt{1-\frac{M^2_H\, M^2\, \beta}{m^2_H\, m^2\, (1+\beta)}} \right)<\gamma< \frac{m\, m_H}{M^2_H\, M^2}\left(1+\sqrt{1-\frac{M^2_H\, M^2\, \beta}{m^2_H\, m^2\, (1+\beta)}} \right)\] there exists a stationary measure $\pi_{\gamma}$ for the Markov chain as defined in \eqref{precalgorithm}.
\end{coro}
\begin{proof}
	The proof is an immediate consequence of Lemma~\ref{lemma:Harris}.
\end{proof}
\begin{lemma}\label{lemma:minorization}
	Let the conditions of Proposition~\ref{prop:drift} hold. Consider the set \[C=\left\{x:\, \tilde{V}(x)\le\frac{2\, \tilde{b}}{\alpha
		-\tilde{\lambda}} \right\}\] for any $\alpha \in (\lambda,1)$. For $x \in C$, we have 
	\begin{align*}
	P(x,A)\ge \eta \,\nu(A)
	\end{align*}
	for some $\eta>0$ and probability measure $\nu$.
\end{lemma}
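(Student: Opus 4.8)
The plan is to prove Lemma~\ref{lemma:minorization} by the standard device of showing that a sublevel set of the drift function is a small set for a kernel possessing an everywhere‑positive Gaussian transition density. Recall from \eqref{precalgorithm}/\eqref{generalized:lmc} that, conditionally on $x_k=x$, the next state is Gaussian, $x_{k+1}\mid x_k\sim N\!\left(x-\gamma H(x)\nabla g(x),\,2\gamma H(x)\right)$, so $P(x,\cdot)$ has a density $p(x,y)$ with respect to Lebesgue measure given by the usual multivariate normal formula. I will take $\nu$ to be the normalized restriction of Lebesgue measure to $C$, i.e. $\nu(A)=\mu^{Leb}(A\cap C)/\mu^{Leb}(C)$, and $\eta=\mu^{Leb}(C)\cdot\eta_0$, where $\eta_0>0$ is a uniform lower bound for $p(x,y)$ over $(x,y)\in C\times C$. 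Granting these, for $x\in C$ and Borel $A$ one has $P(x,A)\ge\int_{A\cap C}p(x,y)\,dy\ge\eta_0\,\mu^{Leb}(A\cap C)=\eta\,\nu(A)$, which is exactly the asserted minorization.

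First I would record that $C$ has positive finite Lebesgue measure. Since $m_H I\le H(x)\le M_H I$ gives $M_H^{-1}I\le H^{-1}(x)\le m_H^{-1}I$, the drift function $V(x)=(x-x^*)^{\mathsf T}H^{-1}(x)(x-x^*)$ obeys $M_H^{-1}|x-x^*|^2\le V(x)\le m_H^{-1}|x-x^*|^2$. Hence, with $D:=2\tilde b/(\alpha-\tilde\lambda)-1$, the set $C=\{\tilde V\le D+1\}$ is sandwiched between the closed balls about $x^*$ of radii $\sqrt{m_H D}$ and $R:=\sqrt{M_H D}$; the inner radius is strictly positive because $\tilde b=b+1-\tilde\lambda$ with $b>0$ and $\alpha-\tilde\lambda<1-\tilde\lambda$ force $2\tilde b/(\alpha-\tilde\lambda)>2$, and the outer ball has finite volume. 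Thus $0<\mu^{Leb}(C)<\infty$ and $\nu$ is a genuine probability measure.

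Next I would lower‑bound the transition density uniformly on $C\times C$. Writing $p(x,y)=(2\pi)^{-p/2}\det(2\gamma H(x))^{-1/2}\exp\{-\tfrac{1}{4\gamma}(y-x+\gamma H(x)\nabla g(x))^{\mathsf T}H^{-1}(x)(y-x+\gamma H(x)\nabla g(x))\}$, the determinant prefactor is bounded below using $\det H(x)\le M_H^p$, and the quadratic form in the exponent is bounded above using $H^{-1}(x)\le m_H^{-1}I$ together with the elementary estimates valid for $x,y\in C$: $|x-x^*|,|y-x^*|\le R$, hence $|y-x|\le 2R$, and $|\nabla g(x)|=|\nabla g(x)-\nabla g(x^*)|\le M|x-x^*|\le MR$ by Assumption~\ref{assm1} (using $\nabla g(x^*)=0$), so $|y-x+\gamma H(x)\nabla g(x)|\le R(2+\gamma M M_H)$. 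This produces an explicit $\eta_0>0$, and $\eta=\mu^{Leb}(C)\,\eta_0$ then has the form displayed in \eqref{key:quantities}; a fully explicit cruder lower bound for it, obtained by further crudely bounding $R^2$ and the $\gamma$‑dependent factors, is the one recorded in Proposition~\ref{prop:lwr:bnd}. Taking $\nu$ as above completes the argument.

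The argument is essentially bookkeeping rather than hard analysis; the one point worth care is that Assumption~\ref{assm:prec:bds} controls only the \emph{oscillation} of $H^{-1}$ (by $\beta/M_H$) and does not assert continuity of $H$, so one cannot simply invoke ``a continuous positive density on a compact set is bounded below.'' This causes no difficulty: the pointwise sandwich $m_H I\le H(x)\le M_H I$ by itself bounds both the determinant and the exponent uniformly over $C$, so the Gaussian density is bounded below on $C\times C$ with no regularity assumption on $H$. The only other thing to keep straight is the sign convention $\alpha\in(\tilde\lambda,1)$, which is what makes $2\tilde b/(\alpha-\tilde\lambda)$ positive and $C$ a genuine neighbourhood of $x^*$ of positive measure.
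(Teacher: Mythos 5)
Your proof takes essentially the same route as the paper's: choose $\nu$ to be normalized Lebesgue measure on $C$ and lower-bound the Gaussian one-step transition density uniformly over $C\times C$ using $m_H I\le H(x)\le M_H I$ together with the Lipschitz bound on $\nabla g$, then integrate over $A\cap C$. If anything you are more careful than the paper --- you keep the correct $\gamma$-dependence of the kernel $N\!\left(x-\gamma H(x)\nabla g(x),\,2\gamma H(x)\right)$ (the paper's displayed density drops $\gamma$ and the centering at $x$) and you verify $0<\mu^{Leb}(C)<\infty$ --- but the argument is the same and your proposal is correct.
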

\begin{proof}
Define $\nu(\cdot)$ as the uniform measure restricted to $C$, i.e., for any $A \in \mathcal{B}(\mathbb{R}^p)$, we have 
\begin{align*}
\nu(A)=\frac{\mu^{Leb}(A\cap C)}{\mu^{Leb}(C)}.
\end{align*}	
For $x \in C$ and $A \subset C$, we have
\begin{align*}
P(x,A)&=\int_A \frac{1}{(2\pi)^{p/2}\det(H(x))^{1/2}}\, \exp\left\{-\frac{1}{2}\left(y-H(x)\nabla g(x)\right)^{\mathsf{T}}H^{-1}(x)\left(y-H(x)\nabla g(x)\right)\right\}\, dy\\
& \ge \frac{1}{(2\pi)^{p/2}M_H^{p/2}} \, \inf_{y \in C, x\in C}\exp\left\{-\frac{1}{2}\left(y-H(x)\nabla g(x)\right)^{\mathsf{T}}H^{-1}(x)\left(y-H(x)\nabla g(x)\right)\right\}\int_A dy\\
&\ge \frac{\mu^{Leb}(C)}{(2\pi)^{p/2}M_H^{p/2}} \, \inf_{y \in C, x\in C}\exp\left\{-\frac{1}{2}\left(y-H(x)\nabla g(x)\right)^{\mathsf{T}}H^{-1}(x)\left(y-H(x)\nabla g(x)\right)\right\} \frac{\mu^{Leb}(A)}{\mu^{Leb}(C)}\\
&\ge \eta \, \nu(A).
\end{align*}
Hence we complete the proof. 
\end{proof}

 With the above preparation we are now ready to present the proofs of Theorem~\ref{thm:geom:erg}, Propositions~\ref{coro:geom:erg} and \ref{prop:lwr:bnd}.

\begin{proof}[Proof of Theorem~\ref{thm:geom:erg}]
	The proof is a direct consequence of Proposition~\ref{prop:drift} with Lemma~\ref{lemma:minorization} along with Theorem 12 from \cite{rosenthal1995minorization}.
\end{proof}
\begin{proof}[Proof of Proposition~\ref{coro:geom:erg}]
    By Lemma~\ref{drift:intermediate}, we know that 
    \begin{align*}
        \mathbb{E}\left[V(x_{k+1})\mid x_k\right] \le \left(1-2\gamma m\,m_H+\gamma^2\, M^2_H\, M^2\right)\, V(x_k)+2\gamma \, p
    \end{align*}
    for any $0<\gamma< \frac{2\, m_H \, m}{M^2_H\, M^2}$. Therefore, we have 
    \[P\tilde{V}(x)\le \tilde{\lambda}\, \tilde{V}(x)+\tilde{b} \]
    where $\tilde{\lambda}=\left(1-2\gamma m\,m_H+\gamma^2\, M^2_H\, M^2\right)$, $b=2\gamma p$,  $\tilde{b}=2\gamma p+1-\tilde{\lambda}$ and $\tilde{V}(x)=V(x)+1$. Note that this result holds with $0<\gamma< \frac{2\, m_H \, m}{M^2_H\, M^2}$. This implies that the drift condition holds with $\gamma$ in the given interval. This also implies that Lemma~\ref{lemma:Harris} and Corollary~\ref{coro:exitence:stat} hold with $0<\gamma< \frac{2\, m_H \, m}{M^2_H\, M^2}$. Hence, the proof follows by the same argument as Theorem~\ref{thm:geom:erg}.
\end{proof}
\begin{proof}[Proof of Proposition~\ref{prop:lwr:bnd}]
    Note that if we can give a lower bound for \[\inf_{y \in C, x\in C}\exp\left\{-\frac{1}{2}\left(y-H(x)\nabla g(x)\right)^{\mathsf{T}}H^{-1}(x)\left(y-H(x)\nabla g(x)\right)\right\}\] then we  indeed finish the proof  by Lemma~\ref{lemma:minorization}. In fact, by noting that 
    \[M^{-1}_H\|x-x^*\|^2\le\tilde{V}(x)\le \frac{2\tilde{b}}{\alpha-\tilde{\lambda}}.\]
    Thus, $C$ is contained in the closed ball with center at $x^*$ and radius $\sqrt{M_H(2\tilde{b}/(\alpha-\tilde{\lambda})-1)}$. We refer to this set as $B(x^*)$. Consequently,
    \begin{align*}
        & \inf_{y \in C, x\in C}\exp\left\{-\frac{1}{2}\left(y-H(x)\nabla g(x)\right)^{\mathsf{T}}H^{-1}(x)\left(y-H(x)\nabla g(x)\right)\right\}\\
        & \ge \inf_{y \in B(x^*), x\in B(x^*)}\exp\left\{-\frac{1}{2}\left(y-H(x)\nabla g(x)\right)^{\mathsf{T}}H^{-1}(x)\left(y-H(x)\nabla g(x)\right)\right\}.
    \end{align*}
     Now, on $B(x^*)$, we have
     \begin{align*}
         &\left(y-H(x)\nabla g(x)\right)^{\mathsf{T}}H^{-1}(x)\left(y-H(x)\nabla g(x)\right)\\
         &= y^{\mathsf{T}} H^{-1}(x) y-2 \, y^{\mathsf{T}}\nabla g(x) +\nabla g(x)^{\mathsf{T}} H(x) \nabla g(x)\\
         &\le \frac{2}{m_H}\left(\|y-x^*\|^2+\|x^*\|^2\right)+2\,\|y\|\|\nabla g(x)\|+M_H\, M^2\|x-x^*\|^2\\
         &\le \frac{2\, M_H}{m_H}\left(\frac{2\tilde{b}}{\alpha -\tilde{\lambda}}-1\right) + \frac{2\|x^*\|^2}{m_H}+2\, M \sqrt{M_H\left(\frac{2\tilde{b}}{\alpha -\tilde{\lambda}}-1\right)}\|y\|+M^2_H\,M^2 \left(\frac{2\tilde{b}}{\alpha -\tilde{\lambda}}-1\right)\\
         &\le \frac{2\, M_H}{m_H}\left(\frac{2\tilde{b}}{\alpha -\tilde{\lambda}}-1\right) + \frac{2\|x^*\|^2}{m_H}+2\, M\, M_H \left(\frac{2\tilde{b}}{\alpha -\tilde{\lambda}}-1\right)\\
         &\quad \quad \quad +2\, M\, \|x^*\| \sqrt{M_H\left(\frac{2\tilde{b}}{\alpha -\tilde{\lambda}}-1\right)}+M^2_H\,M^2 \left(\frac{2\tilde{b}}{\alpha -\tilde{\lambda}}-1\right)\\
         &=\left(\frac{2\tilde{b}}{\alpha -\tilde{\lambda}}-1\right)\left(\frac{2\,M_H}{m_H}+2\, M\,M_H +M^2_H \, M^2\right)+\frac{2\|x^*\|^2}{m_H}+2\, M\, \|x^*\| \sqrt{M_H\left(\frac{2\tilde{b}}{\alpha -\tilde{\lambda}}-1\right)}.
     \end{align*}
     Therefore, 
     \begin{align*}
         &\inf_{y \in C, x\in C}\exp\left\{-\frac{1}{2}\left(y-H(x)\nabla g(x)\right)^{\mathsf{T}}H^{-1}(x)\left(y-H(x)\nabla g(x)\right)\right\} \\
         &\ge \exp\left\{-\left(\frac{2\tilde{b}}{\alpha -\tilde{\lambda}}-1\right)\left(\frac{\,M_H}{m_H}+\, M\,M_H +\frac{M^2_H\, M^2}{2} \right)-\frac{\|x^*\|^2}{m_H}- M\, \|x^*\| \sqrt{M_H\left(\frac{2\tilde{b}}{\alpha -\tilde{\lambda}}-1\right)}\right\}
     \end{align*}
     and the proof is completed. 
\end{proof}
\section{Continuous Time Analysis of Preconditioned Langevin Monte Carlo Algorithm}
\subsection{Continuous Time Analysis of Preconditioned Langevin} 
	Note that we call \eqref{precdiffusion} a diffusion as by \cite[Theorem 5.2.1]{oksendal2013stochastic}. 
	Define as $P_t$ the transition semi-group of the Markov chain associated with \eqref{precdiffusion}. Also use $\mathcal{A}$ to define the generator of $P_t$. We also know that the domain of the generator is any $f \in C^2(\mathbb{R}^p)$.
	\begin{lemma}\label{lemmafirst}
		Under Assumptions~\ref{assm1} and~\ref{assm:prec:bds} we have 
		\begin{align*}
		P_t V(x)  \le \frac{1}{m_H}\left[\exp{\left(-2m\,m_Ht\right)}\left|x-x^*\right|^2+\frac{p}{m} \left(1-\exp{\left(-2m\,m_Ht\right)}\right)\right]
		\end{align*}
		where $V(x)=(x-x^*)^{\mathsf{T}}H^{-1}(x-x^*)$.
	\end{lemma}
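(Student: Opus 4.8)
The plan is to control $P_tV$ via the infinitesimal generator $\mathcal{A}$ of the diffusion \eqref{precdiffusion} together with a Gr\"onwall argument. Since in this section $H$ is a constant matrix, for $V(x)=(x-x^*)^{\mathsf{T}}H^{-1}(x-x^*)$ we have $\nabla V(x)=2H^{-1}(x-x^*)$ and $\nabla^2 V(x)=2H^{-1}$, so, using that the generator of \eqref{precdiffusion} is $\mathcal{A}f=-\langle H\nabla g,\nabla f\rangle+\mathrm{Tr}(H\nabla^2 f)$ (here $\sigma\sigma^{\mathsf{T}}=2H$), a direct computation gives
\begin{align*}
\mathcal{A}V(x)=-2(H\nabla g(x))^{\mathsf{T}}H^{-1}(x-x^*)+2\,\mathrm{Tr}(I_p)=-2\langle\nabla g(x),x-x^*\rangle+2p.
\end{align*}

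Next I would turn this into a one-sided Lyapunov-type bound $\mathcal{A}V\le -2m\,m_H V+2p$. Using $\nabla g(x^*)=0$ and the $m$-strong convexity of $g$ (Assumption~\ref{assm1}), $\langle\nabla g(x),x-x^*\rangle\ge m|x-x^*|^2$; and since $m_H I\le H$ (Assumption~\ref{assm:prec:bds}) we have $V(x)=(x-x^*)^{\mathsf{T}}H^{-1}(x-x^*)\le m_H^{-1}|x-x^*|^2$, i.e.\ $|x-x^*|^2\ge m_H V(x)$. Combining these,
\begin{align*}
\mathcal{A}V(x)\le -2m\,|x-x^*|^2+2p\le -2m\,m_H\,V(x)+2p.
\end{align*}

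Then I would apply It\^o's formula to $t\mapsto e^{2m m_H t}V(X_t)$ along \eqref{precdiffusion}: its drift equals $e^{2m m_H t}\bigl(2m m_H V(X_t)+\mathcal{A}V(X_t)\bigr)\le 2p\,e^{2m m_H t}$ pathwise, and the stochastic-integral part is a local martingale which becomes a genuine martingale after localization by the exit times $\tau_n=\inf\{t\ge 0:|X_t-x^*|\ge n\}$ (non-explosion of $X_t$ follows from the same drift bound, so $\tau_n\to\infty$ a.s.). Taking expectations at $t\wedge\tau_n$ and using the pathwise drift bound shows that $\mathbb{E}_x\bigl[e^{2m m_H(t\wedge\tau_n)}V(X_{t\wedge\tau_n})\bigr]\le V(x)+2p\int_0^t e^{2m m_H s}\,ds=V(x)+\frac{p}{m\,m_H}\bigl(e^{2m m_H t}-1\bigr)$ uniformly in $n$; Fatou's lemma on the left then yields
\begin{align*}
P_tV(x)=\mathbb{E}_x V(X_t)\le e^{-2m m_H t}V(x)+\frac{p}{m\,m_H}\bigl(1-e^{-2m m_H t}\bigr).
\end{align*}
Finally, bounding $V(x)\le m_H^{-1}|x-x^*|^2$ in the first term and factoring out $m_H^{-1}$ produces exactly the claimed inequality. (Alternatively, since the excerpt records that $C^2$ functions lie in the domain of $\mathcal{A}$, one may skip It\^o and use $\frac{d}{dt}P_tV=P_t\mathcal{A}V\le -2m m_H P_tV+2p$ directly, then integrate.)

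The only real obstacle is the technical one of passing expectations through It\^o's formula for the unbounded, quadratically growing function $V$; this is handled entirely by the localization above, where the pathwise drift inequality makes $e^{2m m_H(t\wedge\tau_n)}V(X_{t\wedge\tau_n})-\int_0^{t\wedge\tau_n}2p\,e^{2m m_H s}\,ds$ a supermartingale started at $V(x)$, so no extra moment estimates on $X_t$ are needed beyond $\tau_n\to\infty$.
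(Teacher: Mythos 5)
Your proposal is correct and follows essentially the same route as the paper: compute $\mathcal{A}V(x)=-2\langle\nabla g(x),x-x^*\rangle+2p$, use strong convexity together with $m_H I\le H$ to obtain $\mathcal{A}V\le -2m\,m_H V+2p$, apply Gr\"onwall, and finish with $V(x)\le m_H^{-1}|x-x^*|^2$. The only difference is that the paper runs the Gr\"onwall step directly through Dynkin's formula $\partial_t P_tV=P_t\mathcal{A}V$ (your stated alternative), whereas you justify the same inequality via It\^o's formula with localization and Fatou, which handles the unbounded quadratic $V$ a bit more carefully but is the same argument in substance.
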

	\begin{proof}
		We know that
		\begin{align}\label{generatoreqn}
		\mathcal{A}f(y)=-\left\langle H\nabla g(y), \nabla f(y)\right \rangle+\left \langle \nabla ,H\nabla \right \rangle \left(f(y)\right).
		\end{align}
		For any positive measurable function $V$ define $v(t,x)=P_t V(x)$.
		Using Dynkin's formula (see Oskendal~\cite{oksendal2013stochastic}), we know 
		\begin{align*}
		\frac{\partial v(t,x)}{\partial t}= P_t \mathcal{A}V(x).
		\end{align*}
		Define \[V(x)=(x-x^*)^\mathsf{T}H^{-1}(x-x^*).\]	
		In this case 
		\begin{align*}
		\mathcal{A}V(x)&=-\left \langle H\nabla g(x)-H\nabla g(x^*),2\, H^{-1}(x-x^*)\right\rangle +2\, p\\
		&=-2\,\left(\nabla g(x)-\nabla g(x^*)\right)^{\mathsf{T}}(x-x^*)+2\,p.
		\end{align*}
		By strong convexity, we have \(\left(\nabla g(x)-\nabla g(x^*)\right)^{\mathsf{T}}(x-x^*)\ge m\, \left|x-x^*\right|^2\). This implies 
		\begin{align*}
		\mathcal{A}V(x)& \le -2m\, \left|x-x^*\right|^2+2\,p.
		\end{align*}	
		Now 
		\begin{align*}
		\frac{\partial v(t,x)}{\partial t}& = P_t \mathcal{A}V(x)\\
		& \le -2m \,P_t \left|x-x^*\right|^2 +2\,p\\
		& \le -2m \,m_H\,P_t V(x)+2\,p.
		\end{align*}
		The last line here again follows as $|x-x^*|^2 \ge m_H\,(x-x^*)^{\mathsf{T}}H^{-1}(x-x^*)$.
		This implies 
		\begin{align*}
		\frac{\partial v(t,x)}{\partial t} & \le -2m \, m_H \, v(t,x)+ 2\,p.
		\end{align*}
		Using the Gronwall Lemma, 
		\begin{align*}
		v(t,x) \le \exp{\left(-2m\,m_Ht\right)}\left(x-x^*\right)^{\mathsf{T}}H^{-1}\left(x-x^*\right)+\frac{p}{m\,m_H} \left(1-\exp{\left(-2m\,m_Ht\right)}\right).
		\end{align*}
		Therefore, by the fact \(\left(x-x^*\right)^{\mathsf{T}}H^{-1}\left(x-x^*\right) \le \frac{1}{m_H}|x-x^*|^2\), we obtain
		\[v(t,x) \le \frac{1}{m_H}\left[\exp{\left(-2m\,m_Ht\right)}\left|x-x^*\right|^2+\frac{p}{m} \left(1-\exp{\left(-2m\,m_Ht\right)}\right)\right]\]
		and the desired result follows.	
	\end{proof}
As mentioned previously, $P_t$ is the Markov transition kernel and using Theorem~\ref{thmfoxma} we know that the Preconditioned Langevin diffusion~\eqref{precdiffusion} has a stationary distribution, which we refer to as $\Pi$. Note that for the stationary measure have $\Pi P_t=\Pi$ by definition.
	\begin{lemma}\label{lemmastationary}
		Under Assumptions~\ref{assm1} and~\ref{assm:prec:bds} we have
		\[\Pi V\le \frac{p}{\kappa} \]
		where $\kappa=m\,m_H$.
	\end{lemma}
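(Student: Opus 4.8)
The plan is to combine Lemma~\ref{lemmafirst} with the stationarity identity $\Pi P_t=\Pi$ and then send $t\to\infty$. First I would use stationarity: since $\Pi$ is a probability measure with $\Pi P_t=\Pi$ and $V\ge 0$, Tonelli's theorem gives
\[
\Pi V=(\Pi P_t)V=\int_{\mathbb{R}^p}P_tV(x)\,\Pi(dx)
\]
for every $t\ge 0$. Plugging in the bound of Lemma~\ref{lemmafirst} under the integral sign yields
\[
\Pi V\le \frac{1}{m_H}\left[e^{-2m\,m_Ht}\int_{\mathbb{R}^p}\left|x-x^*\right|^2\Pi(dx)+\frac{p}{m}\left(1-e^{-2m\,m_Ht}\right)\right].
\]

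Second, I would verify that $\int_{\mathbb{R}^p}\left|x-x^*\right|^2\Pi(dx)<\infty$, so that the first term on the right vanishes in the limit. This is immediate from the form of $\Pi$: by Theorem~\ref{thmfoxma} the stationary density is proportional to $e^{-g}$ with $g$ being $m$-strongly convex, hence $e^{-g(x)}\le e^{-g(x^*)-\frac{m}{2}\left|x-x^*\right|^2}$, so $\Pi$ is dominated by a Gaussian centered at $x^*$ and has finite moments of every order; in particular $\Pi V\le \frac{1}{m_H}\int\left|x-x^*\right|^2\,d\Pi<\infty$. Letting $t\to\infty$ in the displayed inequality then gives
\[
\Pi V\le \frac{p}{m\,m_H}=\frac{p}{\kappa},
\]
which is the claim.

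The only delicate point is the finiteness used to make the limit $t\to\infty$ meaningful (otherwise the inequality degenerates to $\infty\le\infty$), and this is exactly what the Gaussian domination above supplies; I expect this to be the sole, and rather mild, obstacle. As an essentially equivalent alternative I could argue directly via the generator: stationarity gives $\int\mathcal{A}V\,d\Pi=0$, while the computation in the proof of Lemma~\ref{lemmafirst} shows $\mathcal{A}V(x)\le -2m\,\left|x-x^*\right|^2+2p\le -2m\,m_H\,V(x)+2p$, whence $0\le -2m\,m_H\,\Pi V+2p$ and the bound $\Pi V\le p/(m\,m_H)$ follows; this route needs the same integrability of $V$ against $\Pi$ to justify applying stationarity to the (unbounded) function $V$.
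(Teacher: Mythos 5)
Your argument is correct, and it shares the paper's skeleton (stationarity $\Pi P_t=\Pi$ plus the bound of Lemma~\ref{lemmafirst}, then $t\to\infty$), but it resolves the integrability issue differently. The paper never verifies $\Pi\left|x-x^*\right|^2<\infty$ in advance: it applies the identity to the truncation $V\wedge c$, uses that $u\mapsto u\wedge c$ is concave (so $P_t(V\wedge c)\le (P_tV)\wedge c$), passes to the limit $t\to\infty$ by dominated convergence with the constant dominating function $c$, obtains $\Pi(V\wedge c)\le p/\kappa$ for every $c$, and concludes by monotone convergence. This is self-contained and would work for any stationary probability measure of the semigroup, without invoking the identification of $\Pi$. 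You instead justify the limit by importing the explicit form $\pi\propto e^{-g}$ from Theorem~\ref{thmfoxma} and the strong-convexity bound $g(x)\ge g(x^*)+\frac{m}{2}\left|x-x^*\right|^2$ (valid since $\nabla g(x^*)=0$), which dominates $\Pi$ by a Gaussian and gives $\Pi\left|x-x^*\right|^2<\infty$, after which the first term vanishes directly as $t\to\infty$; this is perfectly valid and more direct, at the price of relying on the density identification rather than on stationarity alone. Your generator-based alternative also reaches the bound, but as you note it needs the extra justification that $\int\mathcal{A}V\,d\Pi=0$ for the unbounded quadratic $V$ (e.g.\ via localization together with the same moment bound), so it is not really shorter; either of your two routes is acceptable, and the paper's truncation device is simply the way to avoid the moment verification altogether.
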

	\begin{proof}
		We know $\Pi P_t=\Pi$. This implies for a fixed constant $c>0$, we have 
		\begin{align*}
		\Pi\left(V\wedge c\right)&= \Pi P_t\left(V\wedge c\right)\\
		&\le \Pi\left(P_tV \wedge c\right)\\
		&\le \Pi\left\{\frac{1}{m_H}\left[e^{-2\kappa t}\left|x-x^*\right|^2+\frac{p}{m}\left(1-e^{-2\kappa t}\right)\right]\wedge c\right\}.
		\end{align*}
		The second line follows as $\wedge$ is a concave function and the next line follows from Lemma~\ref{lemmafirst}. Using DCT as $t \to \infty$, one has 
		\[\Pi\left(V\wedge c\right)\le \frac{p}{\kappa}\]
		for any $c>0$. Taking $c\to \infty$, by the monotone convergence theorem, we obtain the result.
	\end{proof}
	Next we present a lemma which exhibits a contraction for the $t$ step markov kernel when the starting points are different. Recall the Wasserstein distance between two measures $\mu$ and $\nu$ as defined in \eqref{Tie}.
	\begin{lemma}\label{wassersteincontraction}
		Under Assumptions~\ref{assm1} and~\ref{assm:prec:bds} we have
		\begin{align*}
		W_2(\delta_{X_0}P_t,\delta_{Y_0}P_t)\le \exp{\left(-\kappa t \right)}\, \frac{1}{\sqrt\kappa_*}\left|X_0-Y_0\right|
		\end{align*}
		where $\kappa_*:=m_H/M_H$ is the condition number of $H$.
	\end{lemma}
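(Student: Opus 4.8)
The plan is to use a synchronous coupling of the two diffusions driven by the same Brownian motion, and then to track the evolution of the squared difference in the $H^{-1}$-weighted norm. Concretely, let $X_t$ solve \eqref{precdiffusion} started at $X_0$ and let $Y_t$ solve \eqref{precdiffusion} started at $Y_0$, both using the \emph{same} Brownian motion $B_t$. Since $H$ is a constant matrix, the noise terms cancel in the difference, so $Z_t := X_t - Y_t$ satisfies the deterministic ODE $\dot Z_t = -H\left(\nabla g(X_t)-\nabla g(Y_t)\right)$. Define $\phi(t) = Z_t^{\mathsf{T}} H^{-1} Z_t$. Differentiating, $\dot\phi(t) = -2\, Z_t^{\mathsf{T}} \left(\nabla g(X_t)-\nabla g(Y_t)\right)$, and by $m$-strong convexity (the same estimate used in Lemma~\ref{lemmafirst}) this is bounded above by $-2m\,|Z_t|^2$. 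Since $|Z_t|^2 \ge m_H\, Z_t^{\mathsf{T}} H^{-1} Z_t = m_H\,\phi(t)$, we get $\dot\phi(t) \le -2m\,m_H\,\phi(t) = -2\kappa\,\phi(t)$, and Gronwall gives $\phi(t) \le e^{-2\kappa t}\phi(0)$.

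Next I would convert this weighted-norm bound back to the Euclidean norm on both ends. On the left, $|Z_t|^2 \le M_H\, Z_t^{\mathsf{T}} H^{-1} Z_t = M_H\,\phi(t)$; on the right, $\phi(0) = Z_0^{\mathsf{T}} H^{-1} Z_0 \le \frac{1}{m_H}|Z_0|^2$. Combining, $|Z_t|^2 \le M_H\, e^{-2\kappa t}\, \frac{1}{m_H}|Z_0|^2 = \frac{1}{\kappa_*}\,e^{-2\kappa t}\,|X_0-Y_0|^2$, i.e. $|X_t - Y_t| \le e^{-\kappa t}\,\frac{1}{\sqrt{\kappa_*}}\,|X_0-Y_0|$ almost surely. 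Finally, since $(X_t, Y_t)$ is a valid coupling of $\delta_{X_0}P_t$ and $\delta_{Y_0}P_t$, the definition \eqref{Tie} of $W_2$ as an infimum over couplings yields
\begin{align*}
W_2(\delta_{X_0}P_t, \delta_{Y_0}P_t) \le \left(\mathbb{E}\,|X_t - Y_t|^2\right)^{1/2} \le e^{-\kappa t}\,\frac{1}{\sqrt{\kappa_*}}\,|X_0-Y_0|,
\end{align*}
which is the claimed bound.

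The only real subtlety — and the step I would be most careful about — is justifying that the synchronous coupling is well-posed and that the pathwise differentiation of $\phi$ is legitimate: one needs existence and uniqueness of strong solutions to \eqref{precdiffusion} (which follows from the $M$-Lipschitz gradient of $g$ together with the boundedness of $H$, cf.\ the reference to \cite[Theorem 5.2.1]{oksendal2013stochastic} already invoked), and one needs that the difference process is genuinely differentiable in $t$, which holds because after subtracting the identical martingale parts only the finite-variation drift remains. Everything else is a routine Gronwall argument combined with the eigenvalue bounds $m_H I \le H \le M_H I$ from Assumption~\ref{assm:prec:bds}. Note this lemma is essentially the engine behind Proposition~\ref{conttimeprop}: combining it with the moment bound $\Pi V \le p/\kappa$ from Lemma~\ref{lemmastationary} and integrating against the stationary distribution will produce the stated convergence rate to $\Pi$.
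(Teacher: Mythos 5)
Your proposal is correct and follows essentially the same route as the paper: a synchronous coupling driven by a single Brownian motion, the $H^{-1}$-weighted norm of the difference process, strong convexity, Gronwall, conversion between the weighted and Euclidean norms via $m_H I\le H\le M_H I$, and the coupling characterization of $W_2$. The only notable difference is that you apply Gronwall to the differential inequality for $\phi(t)=Z_t^{\mathsf{T}}H^{-1}Z_t$ \emph{before} switching norms, obtaining the decay $e^{-m\,m_H t}$ consistent with $\kappa=m\,m_H$, whereas the paper switches to the Euclidean norm first and then invokes an integral-form Gronwall (ending with $e^{-m\,M_H t}$); your ordering is the cleaner and safer one, since the integral Gronwall step with a negative kernel is the delicate point in the paper's version.
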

	\begin{proof}[Proof of Lemma~\ref{wassersteincontraction}]
		Consider the stochastic differential equations
		\begin{align*}
		dX_t&=-H\nabla g(X_t)dt+\sqrt{2}H^{1/2}dB_t\\
		dY_t&=-H\nabla g(Y_t)dt+\sqrt{2}H^{1/2}dB_t.
		\end{align*}
		Here the Brownian motions are the same. The starting points of $X_t$ and $Y_t$  are $X_0$ and $Y_0$, respectively. 
		Evidently, 
		\begin{align*}
		dH^{-1}X_t&=-\nabla g(X_t)dt+\sqrt{2}H^{-1/2}dB_t\\
		dH^{-1}Y_t&=-\nabla g(Y_t)dt+\sqrt{2}H^{-1/2}dB_t.
		\end{align*}	
		This implies 
		\begin{align*}
		\left\langle dH^{-1}\left(X_t-Y_t\right),X_t-Y_t\right\rangle=-\left\langle \left(\nabla g(X_t)-\nabla g(Y_t)\right),X_t-Y_t\right\rangle dt
		\end{align*}
		which in turn implies 
		\begin{align*}
		\frac{1}{2}\left(X_t-Y_t\right)^\mathsf{T}H^{-1}\left(X_t-Y_t\right)&=\frac{1}{2}\left(X_0-Y_0\right)^\mathsf{T}H^{-1}\left(X_0-Y_0\right)-\int_{0}^{t}\left\langle \nabla g(X_s)-\nabla g(Y_s),X_s-Y_s\right\rangle ds\\
		&\le \frac{1}{2}\left(X_0-Y_0\right)^\mathsf{T}H^{-1}\left(X_0-Y_0\right)-m\,\int_{0}^{t}\left|X_s-Y_s\right|^2 ds.
		\end{align*}
		Now, \[\frac{1}{m_H}\,\left|X_t-Y_t\right|^2\ge \frac{1}{2}\left(X_t-Y_t\right)^\mathsf{T}H^{-1}\left(X_t-Y_t\right) \ge \frac{1}{M_H}\,\left|X_t-Y_t\right|^2.\]
		Consequently, 
		\begin{align*}
		\left|X_t-Y_t\right|^2 \le \frac{M_H}{m_H}\left|X_0-Y_0\right|^2-2m\,M_H\,\int_{0}^{t}\left|X_s-Y_s\right|^2 ds.
		\end{align*}
		By the Gronwall lemma, we have
		\begin{align*}
		\left|X_t-Y_t\right|^2 &\le \exp{\left(-2t\, m \, M_H\right)}\, \frac{M_H}{m_H}\left|X_0-Y_0\right|^2.
		\end{align*}
		Taking expectation and using the definition of the Wasserstein distance we have 
		\begin{align*}
		W_2(\delta_{X_0}P_t,\delta_{Y_0}P_t)\le \exp{\left(-t\, m \, M_H\right)}\, \left(\frac{M_H}{m_H}\right)^{1/2}\left|X_0-Y_0\right|.
		\end{align*}
		The proof is concluded.
	\end{proof}
This lemma exhibits a contraction in the measures at the t-th time step starting at different point masses. Next we present the proof of Proposition~\ref{conttimeprop} which is one of the key results presented in this work.
	\begin{proof}[Proof of Proposition~\ref{conttimeprop}]
		We start with the triangle inequality of the Wasserstein distance	
		\begin{align*}
		W_2\left(\delta_x P_t,\Pi\right)\le W_2\left(\delta_x P_t,\delta_{x^*} P_t\right)+W_2\left(\delta_{x^*} P_t,\Pi\right).
		\end{align*}
		Using Lemma~\ref{wassersteincontraction}, we have
		\begin{align*}
		W_2\left(\delta_x P_t,\delta_{x^*} P_t\right) \le \sqrt{\frac{M_H}{m_H}} e^{-\kappa t}\left|x-x^*\right|.
		\end{align*}
		Hence, we have 
		\begin{align*}
		W_2\left(\delta_x P_t,\Pi\right) & \le \sqrt{\frac{M_H}{m_H}} e^{-\kappa t}\left|x-x^*\right|+W_2\left(\delta_{x^*} P_t,\Pi\right)\\
		&\le \sqrt{\frac{M_H}{m_H}} e^{-\kappa t}\left|x-x^*\right|+\left(\mathbb{E}_{\Pi}\left(W^2_2(\delta_{x^*} P_t,\delta_x P_t)\right)\right)^{1/2}.
		\end{align*}
		Now, again from Lemma~\ref{lemmastationary} and the fact that $V(x)\ge |x-x^*|^2$, we have
		\begin{align*}
		\frac{M_H\, p}{m_H\, m}\ge M_H\, \Pi V \ge \Pi \left|x-x^*\right|^2.
		\end{align*}
		Hence
		\begin{align*}
		W_2\left(\delta_x P_t,\Pi\right) \le \sqrt{\frac{M_H}{m_H}} e^{-\kappa t}\left|x-x^*\right| + \sqrt{\frac{M_H}{m_H}} e^{-\kappa t}\left(\frac{p\, M_H}{m\, m_H}\right)^{1/2}.
		\end{align*}
	\end{proof}
\subsection{Discrete Time Approximation}
	\begin{lemma}\label{gradientlemma}
		Let  $g(x)$ be a  Lipschitz function defined on $\mathbb{R}^p$ with  Lipschitz  constant $M$. Then we have
		\begin{align*}	\sum_{k=0}^{[T/\gamma]}\mathbb{E}\left|\nabla g(x_k)\right|^2 \le \frac{g(x_0)-g(x^*)+M\, d \, T M_H}{\left(m_H\gamma-\frac{M\gamma^2}{2}M^2_H\right)}
		\end{align*} 
		where $\gamma<2\,m_H/ (M M^2_H)$.
	\end{lemma}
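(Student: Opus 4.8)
The plan is to run the standard smoothness (``descent lemma'') argument, adapted to the preconditioned noisy iteration, and then telescope the resulting one-step inequality. Since $g$ has $M$-Lipschitz gradient, for all $x,y\in\mathbb{R}^p$ we have $g(y)\le g(x)+\langle\nabla g(x),y-x\rangle+\frac{M}{2}|y-x|^2$. Applying this with $x=x_k$ and $y=x_{k+1}=x_k-\gamma H\nabla g(x_k)+\sqrt{2\gamma}\,H^{1/2}\xi_{k+1}$ gives a pointwise bound on $g(x_{k+1})$ in terms of $g(x_k)$, the gradient, and the displacement $x_{k+1}-x_k$.

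Next I would take the conditional expectation given $x_k$. The linear term contributes $-\gamma\,\nabla g(x_k)^{\mathsf T}H\nabla g(x_k)$ since $\mathbb{E}[\xi_{k+1}\mid x_k]=0$, and by $H\ge m_H I$ (Assumption~\ref{assm:prec:bds}) this is at most $-\gamma m_H|\nabla g(x_k)|^2$. For the quadratic term, the cross term between $H\nabla g(x_k)$ and $H^{1/2}\xi_{k+1}$ has zero conditional mean, leaving $\gamma^2|H\nabla g(x_k)|^2+2\gamma\,\mathbb{E}[\xi_{k+1}^{\mathsf T}H\xi_{k+1}\mid x_k]$; here $|H\nabla g(x_k)|^2\le M_H^2|\nabla g(x_k)|^2$ and $\mathbb{E}[\xi_{k+1}^{\mathsf T}H\xi_{k+1}]=\Tr(H)\le d\,M_H$. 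Combining these yields the one-step inequality
\[
\mathbb{E}[g(x_{k+1})\mid x_k]\le g(x_k)-\Big(m_H\gamma-\tfrac{M\gamma^2}{2}M_H^2\Big)\,|\nabla g(x_k)|^2+M\,\gamma\,d\,M_H .
\]

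Finally I would take total expectations, move the gradient term to the left-hand side, and sum over $k=0,1,\dots,[T/\gamma]$: the $g$-terms telescope, and using $g(x_{[T/\gamma]+1})\ge g(x^*)=\min g$ (the minimum exists by Assumption~\ref{assm1}) we discard the last $g$-term, while the accumulated noise contribution is $([T/\gamma]+1)\gamma\,M\,d\,M_H\le T\,M\,d\,M_H$ (up to the floor). The hypothesis $\gamma<2m_H/(M M_H^2)$ makes the coefficient $m_H\gamma-\frac{M\gamma^2}{2}M_H^2$ strictly positive, so dividing through gives the stated bound. I do not expect a genuine obstacle: the only points needing care are correctly computing the conditional second moment of the Gaussian increment so the $\Tr(H)\le d\,M_H$ term surfaces, and bookkeeping the number of summands so that $([T/\gamma]+1)\gamma$ is absorbed into $T$ (this is the mildly loose step); the rest is routine use of the spectral bounds $m_H I\le H\le M_H I$.
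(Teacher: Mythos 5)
Your proposal is correct and follows essentially the same route as the paper: the $M$-smoothness (descent) inequality applied to the preconditioned update, conditional expectation to kill the cross term and surface $-\gamma\,\nabla g^{\mathsf T}H\nabla g\le-\gamma m_H|\nabla g|^2$ together with the $\Tr(H)\le p\,M_H$ noise term, then telescoping and dividing by $m_H\gamma-\frac{M\gamma^2}{2}M_H^2>0$. Even the mildly loose absorption of $([T/\gamma]+1)\gamma$ into $T$ mirrors the paper's own bookkeeping (the paper's $d$ in the statement is just its $p$).
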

	\begin{proof}
		First, 
		\begin{align*}
		g(x_{k+1})&=g\left(x_k-\gamma H\nabla g(x_k)+\sqrt{2\gamma} H^{1/2}\xi_{k+1}\right)\\
		&\le g(x_k)-\left\langle \nabla g(x_k),\gamma H\nabla g(x_k)-\sqrt{2\gamma}H^{1/2}\xi_{k+1}\right\rangle+\frac{M}{2}\left|-\gamma H\nabla g(x_k)+\sqrt{2\gamma}H^{1/2}\xi_{k+1}\right|^2.
		\end{align*}
		By taking expectation, the previous step implies  
		\begin{align*}
		\mathbb{E}\left(g(x_{k+1})\right)\le \mathbb{E}\left(g(x_{k})\right)- \left(m_H\gamma-\frac{M\gamma^2}{2}M^2_H\right)\mathbb{E}\left|\nabla g(x_k)\right|^2+M\, p\gamma M_H.
		\end{align*}
		Hence, 
		\begin{align*}
		\left(m_H\gamma-\frac{M\gamma^2}{2}M^2_H\right)\mathbb{E}\left|\nabla g(x_k)\right|^2 \le \mathbb{E}\left(g(x_{k})\right)-\mathbb{E}\left(g(x_{k+1})\right)+M\, p\gamma M_H.
		\end{align*}
		Therefore we have
		\begin{align*}
		\sum_{k=0}^{[T/\gamma]}\mathbb{E}\left|\nabla g(x_k)\right|^2 \le \frac{g(x_0)-g(x^*)}{\left(m_H\gamma-\frac{M\gamma^2}{2}M^2_H\right)}+\frac{M\, p \, T M_H}{\left(m_H\gamma-\frac{M\gamma^2}{2}M^2_H\right)}.
		\end{align*}
		Hence the proof is completed.
	\end{proof}
	\begin{prop}\label{KLprop}
	Suppose  Assumptions~\ref{assm1} and~\ref{assm:prec:bds} hold. Recall  \eqref{precdiffusion} and \eqref{discrprocess}. We then  have
		\begin{align*}
		KL\left(\mathbb{P}_{X_t}\mid \mid \mathbb{P}_{D_t}\right)& \le M^3_H\,\frac{M\,\gamma^2}{12}\frac{\left(g(x_0)-g(x^*)+M\, p \, T M_H\right)}{\left(m_H-\frac{M\gamma}{2}M^2_H\right)}+M^2_H\,\frac{M}{4} p\, T\gamma
		\end{align*} 
		where $\gamma<2\,m_H/(M M^2_H)$.
	\end{prop}
	\begin{proof}
		Note that \eqref{precdiffusion} and \eqref{discrprocess} have the same Brownian Motion and also admit strong solutions. Hence we can apply Girsanov's Theorem \cite{aries2013statistics} to obtain 
		\begin{align}\label{KLdiveq}
		KL\left(\mathbb{P}_{X_t}\mid \mid \mathbb{P}_{D_t}\right)&=\frac{1}{4}\sum_{k=0}^{[T/\gamma]}\int_{k\gamma}^{(k+1)\gamma} \mathbb{E}\left|H^{1/2}\nabla g(D_t)-H^{1/2}\nabla g(D_k)\right|^2.
		\end{align}
	 We follow in the footsteps of \cite[Lemma 2]{dalalyan2017theoretical}. 
		Given the identity \eqref{KLdiveq}, we have 
		\begin{align*}
		KL\left(\mathbb{P}_{D_t}\mid \mid \mathbb{P}_{X_t}\right)&=\frac{1}{4}\sum_{k=0}^{[T/\gamma]}\int_{k\gamma}^{(k+1)\gamma} \mathbb{E}\left|H^{1/2}\nabla g(D_t)-H^{1/2}\nabla g(D_k)\right|^2\\
		&\le \frac{1}{4}\sum_{k=0}^{[T/\gamma]}\int_{k\gamma}^{(k+1)\gamma} \left|\left|H \right|\right|_2\,\mathbb{E}\left|\nabla g(D_t)-\nabla g(D_k)\right|^2\\
		&\le M_H\,\frac{M}{4}\sum_{k=0}^{[T/\gamma]} \int_{k\gamma}^{(k+1)\gamma} \mathbb{E}\left|D_t -D_k\right|^2\\
		&\le M_H \,\frac{M}{4}\sum_{k=0}^{[T/\gamma]} \int_{k\gamma}^{(k+1)\gamma} \mathbb{E}\left|-H\nabla g(D_k)\left(t-k\gamma\right)+\sqrt{2}H^{1/2}\left(B_t-B_{k\gamma}\right)\right|^2.
		\end{align*}
		Also 
		\begin{align*}
		&\mathbb{E}\left|-H\nabla g(D_k)\left(t-k\gamma\right)+\sqrt{2}H^{1/2}\left(B_t-B_{k\gamma}\right)\right|^2\\
		&=\mathbb{E}\left|-H\nabla g(D_k)\left(t-k\gamma\right)\right|^2+\mathbb{E}\left|\sqrt{2}H^{1/2}\left(B_t-B_{k\gamma}\right)\right|^2\\
		&=\left(t-k\gamma\right)^2\mathbb{E}\left|H\nabla g(D_k)\right|^2+2\,\mathbb{E}\left(B_t-B_{k\gamma}\right)^{\mathsf{T}}H\left(B_t-B_{k\gamma}\right)\\
		& \le M^2_H\left(t-k\gamma\right)^2\mathbb{E}\left|\nabla g(D_k)\right|^2+2\, \left(t-k\gamma\right) trace(H)\\
		&\le M^2_H\left(t-k\gamma\right)^2\mathbb{E}\left|\nabla g(D_k)\right|^2+2\, p\, \left(t-k\gamma\right) M_H.
		\end{align*}
	    Therefore,
		\begin{align*}
		KL\left(\mathbb{P}_{D_t}\mid \mid \mathbb{P}_{X_t}\right)& \le
		M^3_H\,\frac{M\,\gamma^3}{12}\sum_{k=0}^{[T/\gamma]}\mathbb{E}\left|\nabla g(D_k)\right|^2+M_H\,\frac{M}{4}\sum_{k=0}^{[T/\gamma]}p\, \gamma^2\, M_H.
		\end{align*}
		Using Lemma~\ref{gradientlemma}, we have,
		\begin{align*}
		KL\left(\mathbb{P}_{D_t}\mid \mid \mathbb{P}_{X_t}\right)& \le M^3_H\,\frac{M\,\gamma^2}{12}\frac{\left(g(x_0)-g(x^*)+M\, p \, T M_H\right)}{\left(m_H-\frac{M\gamma}{2}M^2_H\right)}+M^2_H\,\frac{M}{4} p\, T\gamma.
		\end{align*}
  The proof is finished. 
	\end{proof}
	
 \begin{remark}\label{shaa}
Note that Proposition~\ref{KLprop} implies that for $\gamma < \kappa_*/(M M_H)$. Therefore, \[KL\left(\mathbb{P}_{D_t}\mid \mid \mathbb{P}_{X_t}\right)=C^*\gamma\]
		where $C^*$ is a constant independent of $\gamma$. 
	Note that we can take $C^*$ as
	\begin{align*}
C^*=\frac{M_H}{6}\, \left(g(x_0)-g(x^*)\right)+\frac{5}{12}\,M^2_H\, MpT.
	\end{align*}
 \end{remark}
 
	We shall connect the KL-divergence with the Wasserstein metric. 
Let $X$ be a Polish space and  $\mathcal{P}(X)$ be the space of all Borel probability measures on $X$. 
Recall $W_{\tilde{p}}(\mu,\nu)$ the Wasserstein-$\tilde{p}$ metric between the probability measures $\mu$ and $\nu$
defined by
	\[W_{\tilde{p}}(\mu,\nu)=\inf_{\tilde{\gamma}\in \Gamma(\mu,\nu)}\left(\int d(x,y)^{\tilde{p}} d\tilde{\gamma}(x,y)\right)^{1/\tilde{p}}\]
	where $\Gamma(\mu,\nu)$ denotes the set of coupling  between $\mu$ and $\nu$ defined in \eqref{Tie}. For the case when $\tilde{p}=2$, this is the canonical Wasserstein-Monge-Kantorvich distance stated in \eqref{Tie}. 
	\begin{lemma}\label{wassKLlemma}\cite[Corollary 3]{bolley2005weighted}
		Let $X$ be a space equipped with a metric distance $d$. Let $\tilde p\ge 1$ and let $\nu$ be a Borel probability measure on $X$. Assume that there exists an $x_0 \in X$ and $\alpha>0$ such that \(\int e^{\alpha d(x_0,x)^{2\tilde{p}}} d\nu(x)\) is finite. Then \[\quad W_{\tilde{p}}(\mu,\nu) \le C\left[KL(\mu\mid \nu)^{1/\tilde{p}}+\left(\frac{KL(\mu\mid \nu)}{2}\right)^{1/2\tilde{p}}\right]\]
	for any $\mu \in \mathcal{P}(X)$, 	where 
		\[C=2 \inf_{x_0 \in X, \alpha>0} \left[\frac{1}{\alpha} \left(\frac{3}{2}+\log \int e^{\alpha d(x_0,x)^{\tilde{p}}} d\nu(x) \right)\right].\]
	\end{lemma}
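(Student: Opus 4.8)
\emph{Remark on strategy.} Since the statement is quoted from \cite[Corollary 3]{bolley2005weighted}, one option is simply to invoke that reference; I will instead sketch the argument, which is a weighted transportation--information inequality and splits into three steps: (a) a deliberately sub-optimal coupling reduces $W_{\tilde p}(\mu,\nu)^{\tilde p}$ to the weighted total variation $\int d(x_0,\cdot)^{\tilde p}\,d|\mu-\nu|$; (b) a weighted Csisz\'ar--Kullback--Pinsker inequality bounds that quantity by $KL(\mu\mid\nu)$, with the exponential moment of the weight entering the constant; (c) taking $\tilde p$-th roots and optimizing over $x_0$ and $\alpha$ yields the stated form. Throughout one may assume $\mu\ll\nu$, since otherwise $KL(\mu\mid\nu)=\infty$ and there is nothing to prove.

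For step (a), write the Jordan decomposition $\mu-\nu=(\mu-\nu)_+-(\mu-\nu)_-$, where both parts have mass $\theta:=\|\mu-\nu\|_{TV}$ and $\mu\wedge\nu$ has mass $1-\theta$ (if $\theta=0$ then $\mu=\nu$ and the claim is trivial). Couple $\mu$ and $\nu$ by transporting the common part $\mu\wedge\nu$ along the diagonal and coupling the two excess parts independently, i.e.\ $\gamma=(\mathrm{id},\mathrm{id})_{\#}(\mu\wedge\nu)+\theta^{-1}(\mu-\nu)_+\otimes(\mu-\nu)_-$, which one checks has marginals $\mu$ and $\nu$. Using $d(x,y)^{\tilde p}\le 2^{\tilde p-1}\bigl(d(x,x_0)^{\tilde p}+d(x_0,y)^{\tilde p}\bigr)$ (convexity of $t\mapsto t^{\tilde p}$ for $\tilde p\ge 1$) and integrating against $\gamma$,
\[
W_{\tilde p}(\mu,\nu)^{\tilde p}\ \le\ \int d(x,y)^{\tilde p}\,d\gamma(x,y)\ \le\ 2^{\tilde p-1}\int d(x_0,\cdot)^{\tilde p}\,d|\mu-\nu|.
\]

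For step (b), put $h=d\mu/d\nu$ and $\varphi=d(x_0,\cdot)^{\tilde p}\ge 0$. On $\{h\ge 1\}$, the Young inequality $\lambda\varphi h\le h\log h-h+e^{\lambda\varphi}$ (any $\lambda>0$), together with $\int h\log h\,d\nu=KL(\mu\mid\nu)$, $\int h\,d\nu=1$, and $-1\le h\log h-h\le 0$ on $\{h<1\}$, gives $\int\varphi\,d(\mu-\nu)_+\le \lambda^{-1}\bigl(KL(\mu\mid\nu)+\int e^{\lambda\varphi}\,d\nu\bigr)$. On $\{h<1\}$, Cauchy--Schwarz and the classical Pinsker bound $\theta\le\sqrt{KL(\mu\mid\nu)/2}$ (via $(1-h)^2\le 1-h$) give $\int\varphi\,d(\mu-\nu)_-\le\bigl(\int\varphi^2\,d\nu\bigr)^{1/2}\bigl(KL(\mu\mid\nu)/2\bigr)^{1/4}$. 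Jensen's inequality replaces $\int e^{\lambda\varphi}\,d\nu$ and $\int\varphi^2\,d\nu$ by terms of the form $\alpha^{-1}\log\int e^{\alpha d(x_0,\cdot)^{\tilde p}}\,d\nu$, all finite because the hypothesized exponential moment of $d(x_0,\cdot)^{2\tilde p}$ dominates them; optimizing $\lambda,\alpha$ and substituting step (a) yields $W_{\tilde p}(\mu,\nu)^{\tilde p}\le C^{\tilde p}\bigl[KL(\mu\mid\nu)+(KL(\mu\mid\nu)/2)^{1/2}\bigr]$ with $C$ as in the statement. Taking $\tilde p$-th roots and using $(a+b)^{1/\tilde p}\le a^{1/\tilde p}+b^{1/\tilde p}$ for $\tilde p\ge 1$ completes the proof.

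The main obstacle is the sharp form of the weighted Pinsker inequality in step (b): extracting precisely the two-term right-hand side $KL^{1/\tilde p}+(KL/2)^{1/(2\tilde p)}$ together with the exact constant $2\inf_{x_0,\alpha}\alpha^{-1}\bigl(\tfrac32+\log\int e^{\alpha d(x_0,x)^{\tilde p}}\,d\nu\bigr)$ requires careful bookkeeping of the Young and variational estimates and of the region $\{h<1\}$; this is exactly where the exponential-moment hypothesis is used, since it is what keeps the $\nu$-moment surviving the Cauchy--Schwarz step finite. The reduction in step (a), by contrast, is routine on a Polish space: the coupling is given by an explicit formula and no measurable-selection subtleties arise.
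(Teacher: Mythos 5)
The paper offers no proof of this lemma at all -- it is quoted verbatim as \cite[Corollary 3]{bolley2005weighted} -- so your fallback of simply invoking the reference coincides exactly with what the paper does, and your sketch merely reproduces the cited source's own weighted Csisz\'ar--Kullback--Pinsker strategy (sub-optimal coupling reducing $W_{\tilde p}^{\tilde p}$ to a weighted total variation, then a Young/variational bound on $\{h\ge 1\}$ and a Cauchy--Schwarz--Pinsker bound on $\{h<1\}$). That sketch is structurally sound, with the caveat you yourself flag: the precise constant $2\inf_{x_0,\alpha}\alpha^{-1}\bigl(\tfrac32+\log\int e^{\alpha d(x_0,x)^{\tilde p}}d\nu\bigr)$ and the exact two-term right-hand side are asserted rather than derived, which is precisely the bookkeeping the citation is meant to cover.
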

	We shall use Lemma~\ref{wassKLlemma} with $\mu=\mathbb{P}_{D_t}$, $\nu=\mathbb{P}_{X_t}$, $d(x,y)=|x-y|$, $\tilde p=2$ and $X=\mathbb{R}^p$. Note that to use Lemma~\ref{wassKLlemma}, we must establish the moment condition which is equivalent to establishing \(\mathbb{E}\left(e^{\alpha |X_t|^2}\right)<\infty\) for some $\alpha>0$. Note that we can take any $\alpha$ and $ x_0=0$ as $C$ is defined as the minimum value.
Define $L_t=\exp(\alpha X^{\mathsf{T}}_t H^{-1}X_t)$.
\begin{lemma}\label{MomentLemma}
	Under Assumptions~\ref{assm1} and~\ref{assm:prec:bds} with \(0<\alpha<\kappa/2 \), we have
\begin{align*}
\mathbb{E}\left(L_t\right) \le e^{2\alpha p T }\left[E\left(L_0\right)+\frac{\alpha \left|\nabla g(0)\right|^2}{\left(2 m- \frac{4\alpha}{m_H}\right)}\, T\right]
\end{align*}
for any $t$ and $T$ with $0\leq t \leq T$.
\end{lemma}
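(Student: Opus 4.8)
\emph{Proof idea.} The plan is a Lyapunov/Dynkin argument run on the exponential functional $L_t=\exp(\alpha V(X_t))$, where $V(y)=y^{\mathsf{T}}H^{-1}y$ is the quadratic form of Lemma~\ref{lemmafirst} but now centred at the origin rather than at $x^*$; this is the right object since the moment hypothesis needed to invoke Lemma~\ref{wassKLlemma} is exactly $\mathbb{E}\,e^{\alpha|X_t|^2}<\infty$ with base point $0$. First I would note $L\in C^2(\mathbb{R}^p)$, so it lies in the domain of the generator $\mathcal{A}$ of $P_t$, and compute $\mathcal{A}L$ from \eqref{generatoreqn}. Using $\nabla V(y)=2H^{-1}y$, $\nabla^2 V(y)=2H^{-1}$, $\nabla L(y)=2\alpha e^{\alpha V(y)}H^{-1}y$ and $\mathrm{Tr}\big(H\nabla^2 L(y)\big)=e^{\alpha V(y)}\big(4\alpha^2 V(y)+2\alpha p\big)$, one obtains
\[
\mathcal{A}L(y)=e^{\alpha V(y)}\Big(-2\alpha\,\langle\nabla g(y),y\rangle+4\alpha^2\,V(y)+2\alpha p\Big).
\]

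The next step is to bound the bracket from above by an affine function of $L(y)$. By $m$-strong convexity (Remark~\ref{assm1:rmrk1}, i.e.\ $\nabla^2 g\succeq mI$) one has $\langle\nabla g(y)-\nabla g(0),y\rangle\ge m|y|^2$, hence $-2\alpha\langle\nabla g(y),y\rangle\le -2\alpha m|y|^2+2\alpha|\nabla g(0)|\,|y|$; and from $m_HI\le H$ one gets $V(y)\le |y|^2/m_H$, so $4\alpha^2 V(y)\le (4\alpha^2/m_H)|y|^2$. Collecting the quadratic terms produces the coefficient $-\big(2\alpha m-\tfrac{4\alpha^2}{m_H}\big)=-2\alpha\big(m-\tfrac{2\alpha}{m_H}\big)$, which is strictly negative \emph{precisely because} $0<\alpha<\kappa/2=m m_H/2$; this is the only place the hypothesis on $\alpha$ is used. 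Completing the square then gives
\[
-\Big(2\alpha m-\tfrac{4\alpha^2}{m_H}\Big)|y|^2+2\alpha|\nabla g(0)|\,|y|\ \le\ \frac{(2\alpha|\nabla g(0)|)^2}{4\big(2\alpha m-4\alpha^2/m_H\big)}=\frac{\alpha|\nabla g(0)|^2}{2m-4\alpha/m_H},
\]
so that $\mathcal{A}L(y)\le 2\alpha p\,L(y)+\dfrac{\alpha|\nabla g(0)|^2}{2m-4\alpha/m_H}$.

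To conclude I would integrate this drift inequality. Since $L$ and $\mathcal{A}L$ are unbounded, I would localise with the exit times $\tau_n=\inf\{t:|X_t|\ge n\}$, apply Dynkin's formula on $[0,t\wedge\tau_n]$, and pass to the limit $n\to\infty$ using monotone convergence together with the crude a priori estimate $\mathbb{E}\,L_{t\wedge\tau_n}\le e^{\alpha n^2/m_H}$, which yields for $0\le t\le T$
\[
\mathbb{E}(L_t)\le \mathbb{E}(L_0)+\frac{\alpha|\nabla g(0)|^2}{2m-4\alpha/m_H}\,t+2\alpha p\int_0^t\mathbb{E}(L_s)\,ds.
\]
Bounding the $t$-linear term by its value at $T$ and treating $\mathbb{E}(L_0)+\tfrac{\alpha|\nabla g(0)|^2}{2m-4\alpha/m_H}T$ as a constant prefactor, Gronwall's inequality gives $\mathbb{E}(L_t)\le e^{2\alpha p t}\big(\mathbb{E}(L_0)+\tfrac{\alpha|\nabla g(0)|^2}{2m-4\alpha/m_H}T\big)\le e^{2\alpha p T}\big(\mathbb{E}(L_0)+\tfrac{\alpha|\nabla g(0)|^2}{2m-4\alpha/m_H}T\big)$, which is the claim.

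The main obstacle I anticipate is the rigorous justification of the Dynkin step for this unbounded exponential Lyapunov function: one must first know that $\mathbb{E}(L_t)$ is finite and that the local-martingale part integrates to zero, which is exactly what the localization plus a monotone/uniform-integrability argument is needed for; the pointwise generator estimate itself is routine once the $C^2$ form of $L$ is in hand. A secondary delicate point is the bookkeeping in the last step so that the additive part of the drift contributes a clean $T$-linear term rather than an $e^{2\alpha pT}$ factor — this is what forces one to separate the constant $\tfrac{\alpha|\nabla g(0)|^2}{2m-4\alpha/m_H}$ from the multiplicative $2\alpha p\,L$ term before integrating and applying Gronwall with a constant prefactor.
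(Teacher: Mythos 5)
Your proposal is essentially the paper's own argument: your generator computation on $L(y)=e^{\alpha\, y^{\mathsf{T}}H^{-1}y}$ is just the semigroup form of the paper's It\^o expansion of $dL_t$, and the strong-convexity bound $\langle y,\nabla g(y)\rangle\ge m|y|^2-|y|\,|\nabla g(0)|$, the completion of the square using $0<\alpha<\kappa/2$, the vanishing of the stochastic-integral term in expectation, and the final Gronwall step (with the additive term kept as a $T$-linear prefactor) all match the paper's proof step for step. The only real difference is that you justify discarding the martingale term by localization with exit times, whereas the paper simply asserts that $\int_0^t L_s X_s^{\mathsf{T}}H^{-1/2}\,dB_s$ is a mean-zero square-integrable martingale.
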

\begin{proof}[Proof]
	Recall that \[dX_t=-H\nabla g(X_t) dt+\sqrt{2}H^{1/2}dB_t. \] Hence by the rules of Ito differentials $dX_t dX^{\mathsf{T}}_t=2 H dt.$ 
	Again, by the rules of Ito calculus, we obtain
	\begin{align*}
	dL_t=2\, \alpha L_t \left\langle H^{-1}X_t, dX_t \right\rangle +2\,\alpha^2 L_t dX^{\mathsf{T}}_t H^{-1}X_tX^{\mathsf{T}}H^{-1}dX_t+\alpha L_t dX^{\mathsf{T}}_t H^{-1}dX_t.
	\end{align*}
	By using the definition of $X_t$ and by using the trace function on differentials, we get
	\begin{align*}
	dX^{\mathsf{T}}_t H^{-1}X_tX^{\mathsf{T}}H^{-1}dX_t&=X^{\mathsf{T}}_t H^{-1}dX_tdX^{\mathsf{T}}_t H^{-1}X_t\\
	&=2\, X^{\mathsf{T}}_t H^{-1}X_t dt.
	\end{align*}
	Using the definition of $X_t$ again to see
	\begin{align*}
	dL_t=-2\alpha \left \langle X_t, \nabla g(X_t)\right\rangle dt +2\sqrt{2}\alpha L_t X^{\mathsf{T}}_t H^{-1/2}dB_t +4\alpha^2 L_t X^{\mathsf{T}}H^{-1}X_t dt +2\alpha \, p L_t dt.
	\end{align*}
	Note that for any $x\in \mathbb{R}^p$, we have 
	\begin{align*}
	\left\langle x, \nabla g(x)\right\rangle&=\left\langle x-0, \nabla g(x)-\nabla g(0)\right\rangle +\left\langle x, \nabla g(0)\right\rangle\\
	&\ge m\left|x\right|^2-\left|x\right|\left|\nabla g(0)\right|.
	\end{align*}
Here the last line follows from the strong convexity and Cauchy-Schwartz inequality. Hence 
\begin{align*}
4\alpha^2 x^{\mathsf{T}}H^{-1}x-2\alpha x^{\mathsf{T}}\nabla g(x) &\le \frac{4\alpha^2}{m_H}   \left|x\right|^2 - 2\alpha m \left|x\right|^2+2\alpha \left|\nabla g(0)\right|\left|x\right|\\
&\le -\left(2\alpha m -\frac{4\alpha^2}{m_H}\right)\left(\left|x\right|-\frac{\alpha \left|\nabla g(0)\right|}{\left(2\alpha m- \frac{4\alpha^2}{m_H}\right)}\right)^2\\
&\quad \quad +\frac{\alpha^2 \left|\nabla g(0)\right|^2}{\left(2\alpha m- \frac{4\alpha^2}{m_H}\right)}\\
&\le \frac{\alpha \left|\nabla g(0)\right|^2}{\left(2 m- \frac{4\alpha}{m_H}\right)}.
\end{align*}
Thus
\begin{align*}
L_t \le L_0 +\frac{\alpha \left|\nabla g(0)\right|^2}{\left(2 m- \frac{4\alpha}{m_H}\right)}\, t+2\sqrt{2}\alpha \int_{0}^{t} L_s X^{\mathsf{T}}_sH^{-1/2} dB_s +2\alpha p \int_{0}^{t}L_s ds.
\end{align*}
Taking expectation we find that
\begin{align*}
\mathbb{E}\left(L_t\right) \le \mathbb{E}\left(L_0\right) +\frac{\alpha \left|\nabla g(0)\right|^2}{\left(2 m- \frac{4\alpha}{m_H}\right)}\, t+2\sqrt{2}\alpha \mathbb{E}\left( \int_{0}^{t} L_s X^{\mathsf{T}}_sH^{-1/2} dB_s\right) +2\alpha p \int_{0}^{t}\mathbb{E}\left(L_s\right) ds.
\end{align*}
Noting that $M_t=\int_{0}^{t} L_s X^{\mathsf{T}}_sH^{-1/2} dB_s$ is a continuous time square-integrable martingale with $\mathbb{E}(M_t)=0$.
As a consequence,
\begin{align*}
\mathbb{E}\left(L_t\right) \le \mathbb{E}\left(L_0\right) +\frac{\alpha \left|\nabla g(0)\right|^2}{\left(2 m- \frac{4\alpha}{m_H}\right)}\, t +2\alpha p \int_{0}^{t}\mathbb{E}\left(L_s\right) ds.
\end{align*}
By using the Gronwall Lemma, we obtain
\begin{align*}
\mathbb{E}\left(L_t\right) &\le \exp\left\{2\alpha t p\right\}\left(E\left(L_0\right)+\frac{\alpha \left|\nabla g(0)\right|^2}{\left(2 m- \frac{4\alpha}{m_H}\right)}\, t\right)\\
& \le \exp\left\{2\alpha T p\right\}\left(E\left(L_0\right)+\frac{\alpha \left|\nabla g(0)\right|^2}{\left(2 m- \frac{4\alpha}{m_H}\right)}\, T\right).
\end{align*}
The proof is completed. 
\end{proof}
\begin{coro}
	Under the setting of Lemma~\ref{MomentLemma}, we have \[\exp\left\{\frac{\alpha}{M_H}\left|X_t\right|^2\right\}\le \exp\left\{2\alpha T p\right\}\left(E\left(L_0\right)+\frac{\alpha \left|\nabla g(0)\right|^2}{\left(2 m- \frac{4\alpha}{m_H}\right)}\, T\right).\]
\end{coro}
\begin{proof}
	The proof is immediate from the inequality \[\frac{1}{M_H}\left|x\right|^2 \le x^{\mathsf{T}}H^{-1}x.\]
\end{proof}

\subsection{Proof of Theorem~\ref{Mainthm}}

\begin{proof}[Proof of Theorem~\ref{Mainthm}]
    By the definition of $T$, we have 
    \[W_2(\mathbb{P}_{X_t}, \Pi) \le \frac{\epsilon}{2}\] due to Proposition~\ref{conttimeprop}. Also by the definitions of $C,C^*,\gamma$, we have 
    \[W_2(\mathbb{P}_{X_t},\mathbb{P}_{D_t}) \le \frac{\epsilon}{2}\]
    by using Proposition~\ref{KLprop} and Lemma~\ref{wassKLlemma} in order. The result follows by the triangle inequality of the Wasserstein distance.
\end{proof}
\end{appendices}

\end{document}